\newtheorem{theorem}{Theorem}[section]
\newtheorem{proposition}[theorem]{Proposition}
\newtheorem{lemma}[theorem]{Lemma}
\newtheorem{corollary}[theorem]{Corollary}
\newenvironment{proof}{{\bf Proof }}{\hfill $\Box$}
\newcommand{\CC}{\mathbb{C}}
\newcommand{\EE}{\mathbb{E}}
\newcommand{\ZZ}{\mathbb{Z}}
\newcommand{\PP}{\mathbb{P}}
\def\ps#1#2{\langle\,{#1}\,,\,{#2}\,\rangle}
\newcommand{\rE}{\mathcal{E}}
\newcommand{\rH}{\mathcal{H}}
\newcommand{\rK}{\mathcal{K}}
\newcommand{\rM}{\mathcal{M}}
\newcommand{\rL}{\mathcal{L}}
\newcommand{\rV}{\mathcal{V}}
\newcommand{\s}{\sigma}
\font\timesept=cmr7
\def\tr{{\mbox{tr}}}
\def\qq{\qquad}
\def\sm{{\scriptstyle -}}
\def\tr{\mathop{\rm Tr\,}\nolimits}
\def\ps#1#2{\langle #1\, ,\, #2\rangle}
\def\norme#1{\left\| #1\right\|}
\def\normca#1{{\left\| #1\right\|}^2}
\def\ab#1{\left\vert #1\right\vert}
\def\D{\Delta}
\def\a{\alpha}
\def\b{\beta}
\def\s{\sigma}
\def\m{\mu}
\def\r{\rho}
\def\l{\lambda}
\def\g{\gamma}
\begin{document}


\title{Open\\Quantum Random Walks\footnote{Work supported by ANR project ``HAM-MARK", N${}^\circ$ ANR-09-BLAN-0098-01, by South African Research Chair Initiative of the Department of Science and Technology and National Research Foundation}}

\author{S. Attal${}^{{}^1}$, F. Petruccione${}^{{}^2}$, C. Sabot${}^{{}^1}$ and I. Sinayskiy${}^{{}^2}$}

\date{}

\maketitle

\centerline{\timesept ${}^{{}_1}$ Universit\'e de Lyon}
\vskip -1mm
\centerline{\timesept Universit\'e de Lyon 1, C.N.R.S.}
\vskip -1mm
\centerline{\timesept Institut Camille Jordan}
\vskip -1mm
\centerline{\timesept 21 av Claude Bernard}
\vskip -1mm
\centerline{\timesept 69622 Villeubanne cedex, France}

\bigskip
\centerline{\timesept ${}^{{}_2}$ Quantum Research Group}
\vskip -1mm
\centerline{\timesept School of Physics and National Institute for Theoretical Physics}
\vskip -1mm
\centerline{\timesept University of KwaZulu-Natal}
\vskip -1mm
\centerline{\timesept Durban 4001, South Africa}

\bigskip
\begin{abstract}
A new model of quantum random walks is introduced, on lattices as well as on finite graphs. These quantum random walks take into account the behavior of open quantum systems. They are the exact quantum analogues of classical Markov chains. We explore the ``quantum trajectory" point of view on these quantum random walks, that is, we show that measuring the position of the particle after each time-step gives rise to a classical Markov chain, on the lattice times the state space of the particle. This quantum trajectory is a simulation of the master equation of the quantum random walk. The physical pertinence of such quantum random walks and the way they can be concretely realized is discussed. Differences and connections with the already well-known quantum random walks, such as the Hadamard random walk, are established. 
\end{abstract}

\section{Introduction}

Random walks \cite{pt, rwp} are a useful mathematical concept, which found successful applications, e. g.,  in physics \cite{rwp}, computer science \cite{rwcs}, economics \cite{rwe} and biology \cite{rwb}. Basically, the trajectory of a random walk consists of a sequence of random steps on some underlying set of connected vertices \cite{rwp}. It is appealing to extend the concept of the classical random walk to the quantum domain. Quantum walks can be introduced in a discrete time \cite{aharonov} and in a continuous time \cite{FG} fashion. While for a classical random walk the probability distribution of the position of the walker depends only on the transition rates between the nodes of the graph, in the quantum case \cite{kempe} the probability amplitude of the walker depends on the dynamics of his internal degrees of freedom.  The appearance of interference effects makes these walks truly quantum. These quantum random walks, that we shall call Unitary Quantum Walks (for a reason which will appear clear in Section 10) have been successful for they give rise to strange behaviors of the probability distribution as time goes to infinity. In particular one can prove that they satisfy a rather surprising Central Limit Theorem whose speed is $n$, instead of $\sqrt{n}$ as usually, and the limit distribution is not Gaussian, but more like functions of the form (see \cite{Kon}): 
$$
x\mapsto\frac{\sqrt{1-a^2}\,(1-\l x)}{\pi\,(1-x^2)\,\sqrt{a^2-x^2}}\,,
$$
where $a$ and $\l$ are constants. 

Unitary quantum walks found wide application in quantum computing \cite{qaqrw}. Although, the physical implementation of  any quantum concept is usually  difficult due to unavoidable dissipation and decoherence effects \cite{toqs},  experimental realizations of unitary quantum random walks have been reported. Implementations with negligible effect of decoherence and dissipation were realized in optical lattices \cite{QWOL}, on photons in waveguide lattices \cite{qwwl}, with trapped ions \cite{qwti} and free single photons in space \cite{qwaw}. 

Recently, there has been interest in understanding the role of quantum transport in biological systems \cite{qbt}. Naturally, this raises the question of finding a framework for quantum walks in an open environment, for which dissipation and decoherence will play a non negligible role. On the contrary, such open quantum walks may even assist in the understanding of quantum efficiency. 

Over the last decade the implications of an open system approach to the dynamics of quantum walks have been addressed \cite{ken1}. Usually in these approaches the amount of dissipation and decoherence is minimal compared to the unitary driven coherence in the system of interest. In particular, effects of small amounts of decoherence have been shown to enhance some properties of quantum walks that make them useful for quantum computing. Recently, the framework of quantum stochastic walks was proposed \cite{qsw}, that allows to study the direct transition between classical random and quantum walks. Recently this transition was observed in various experiments \cite{exp}. General quantum walks on a lattice have been shown to have an interesting asymptotic behavior \cite{werner}.

The purpose of this article is to introduce a formalism for discrete time open quantum walks, which is exclusively based on the non-unitary dynamics induced by the local environments. The formalism suggested is similar to the formalism of quantum Markov chains \cite{gudder} and rests upon the implementation of appropriate completely positive maps \cite{toqs, kraus}. Our approach is rather different from \cite{gudder} for we put the emphasis on the quantum random walk character of these quantum Markov chains, we study their properties with the random walk point of view (limit distribution, transport properties, etc), we study their physical pertinence and make a physical connection with the unitary quantum random walks.

As we will show below the formalism of the open quantum random walks includes the classical random walk and through a physical realization procedure a connection to the unitary quantum walk is established. Furthermore, the OQRW allows for an unravelling in terms of quantum trajectories. In general, the behavior of the walk can not be explained in terms of classical or unitary walks. The particular properties of the OQRW make it a promising candidate for modeling of  quantum efficiency in biological systems and quantum computing.

\section{General Setup}\label{S:setup}

We now introduce the general mathematical and physical setup of the Open Quantum Random Walks. For sake of completeness we recall in this section several technical lemmas which ensure that our definitions are consistent. We omit their proofs as they all consist in easy exercises of Analysis. 

\bigskip
We are given a set $\rV$ of vertices, which might be finite or countable infinite. We consider all the oriented edges $\{(i,j)\,;\ i,j\in\rV\}$. We wish to give a quantum analogue of a random walk on the associated graph (or lattice).

We consider the space $\rK=\CC^\rV$, that is, the state space of a quantum system with as many degrees of freedom as the number of vertices; when $\rV$ is infinite countable we put $\rK$ to be any separable Hilbert space with an orthonormal basis indexed by $\rV$. We fix an orthonormal basis of $\rK$ which we shall denote by ${(\vert i\rangle)}_{i\in\rV}$. 

Let $\rH$ be a separable Hilbert space; it 
 stands for the space of degrees of freedom (or \emph{chirality} as they call it in Quantum Information Theory \cite{kempe}) given at each point of $\rV$. Consider the space $\rH\otimes \rK$. 

\smallskip
For each edge $(i,j)$ we are given a bounded operator $B^i_j$ on $\rH$. This operator stands for the effect of passing from $j$ to $i$. We assume that, for each $j$ 
\begin{equation}\label{E:rLi}
\sum_i {B^i_j}^* B^i_j= I\,,
\end{equation}
where the above series is strongly convergent (if infinite). This constraint has to be understood as follows: ``the sum of all the effects leaving the site $j$ is $I$\,". It is the same idea as the one for transition matrices associated to Markov chains: ``the sum of the probabilities leaving a site $j$ is 1". 

By Lemma \ref{L:1} which follows, to each $j\in\rV$ is associated a completely positive map on the density matrices of $\rH$:
$$
\rM_j(\rho)=\sum_i B^i_j \rho {B^i_j}^*\,.
$$
\begin{lemma}\label{L:1}
Let $(B_i)$ be a sequence of bounded operators on a separable Hilbert space $\rH$ such that the series
$\sum_i B_i^*B_i
$
converges strongly to a bounded operator $T$.
If $\r$ is a positive trace-class operator on $\rH$ then the series
$$
\sum_i B_i \r B_i^*
$$
is trace-norm convergent and 
$$
\tr\left(\sum_i B_i\r B_i^*\right)=\tr(\r T)\,.
$$
\end{lemma}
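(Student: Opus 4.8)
The plan is to lean on the positivity of every operator in sight, so that each exchange of a limit with a trace becomes an instance of monotone convergence and no delicate norm estimates are needed; the trace identity and the trace-norm convergence will then fall out of essentially the same computation.

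First I would check that each term $B_i\rho B_i^*$ is a positive trace-class operator. Positivity is immediate from $\langle x,\, B_i\rho B_i^* x\rangle = \langle B_i^* x,\, \rho\, B_i^* x\rangle \ge 0$. Writing $\rho^{1/2}$ for the positive square root of $\rho$, one has $\tr((\rho^{1/2})^*\rho^{1/2}) = \tr(\rho) < \infty$, so $\rho^{1/2}$ is Hilbert--Schmidt; hence $B_i\rho^{1/2}$ is Hilbert--Schmidt, the operator $B_i\rho B_i^* = (B_i\rho^{1/2})(B_i\rho^{1/2})^*$ is trace-class, and $\tr(B_i\rho B_i^*) = \tr(\rho^{1/2}B_i^*B_i\rho^{1/2})$. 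Consequently, setting $S_N := \sum_{i=1}^N B_i\rho B_i^*$ and $T_N := \sum_{i=1}^N B_i^*B_i$, the $S_N$ form a nondecreasing sequence of positive trace-class operators, and for any orthonormal basis $(e_k)$ of $\rH$,
$$\tr(S_N) = \tr(\rho^{1/2}T_N\rho^{1/2}) = \sum_k \langle \rho^{1/2}e_k,\ T_N\rho^{1/2}e_k\rangle .$$

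Next I would pass to the limit in $N$. Since each $B_i^*B_i \ge 0$, the sequence $T_N$ is nondecreasing, and the strong convergence of $\sum_i B_i^*B_i$ to $T$ gives $\langle x, T_N x\rangle = \sum_{i\le N}\|B_i x\|^2 \uparrow \langle x, Tx\rangle$ for every $x\in\rH$; in particular $0\le T_N\le T$. Hence each summand in the last display is nonnegative and nondecreasing in $N$, and monotone convergence (over the index $N$ and over the basis index $k$) yields
$$\tr(S_N)\ \uparrow\ \sum_k \langle \rho^{1/2}e_k,\ T\rho^{1/2}e_k\rangle = \tr(\rho^{1/2}T\rho^{1/2}) = \tr(\rho T) < \infty,$$
the last quantity being finite because $T\rho$ is trace-class with $|\tr(T\rho)|\le\|T\|\,\tr(\rho)$. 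In particular $\sup_N\tr(S_N) = \tr(\rho T)$.

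Finally I would upgrade this to trace-norm convergence of $(S_N)$. Being nondecreasing and norm-bounded (for positive operators $\|S_N\|\le\tr(S_N)\le\tr(\rho T)$), $(S_N)$ converges strongly to a bounded positive operator $S$; applying monotone convergence to $\tr(S_N)=\sum_k\langle e_k, S_N e_k\rangle$ gives $\tr(S) = \sup_N\tr(S_N) = \tr(\rho T) < \infty$, so $S$ is trace-class. For $M\ge N$ the operator $S_M-S_N$ is positive, so its trace norm equals $\tr(S_M)-\tr(S_N)$, which tends to $0$ as $N\to\infty$; thus $(S_N)$ is Cauchy in trace norm, hence convergent there, and its trace-norm limit must be $S$ since trace-norm convergence implies strong convergence. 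By continuity of the trace on trace-class operators, $\tr(S) = \lim_N\tr(S_N) = \tr(\rho T)$, which is the asserted identity. There is no serious obstacle here — this is precisely the kind of routine analysis the paper defers — and the only mildly delicate point is the trace-class bookkeeping: one must make sure the operators being manipulated are genuinely trace-class (or Hilbert--Schmidt) before invoking cyclicity of the trace and the rearrangements above, after which everything reduces to monotone convergence of nonnegative quantities.
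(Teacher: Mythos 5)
Your proof is correct. Note that the paper does not actually supply a proof of this lemma: Section 2 states that these technical lemmas ``consist in easy exercises of Analysis'' and omits their proofs, so there is no argument of the authors to compare yours against. Your route --- factoring $\rho=\rho^{1/2}\rho^{1/2}$ to see that each $B_i\rho B_i^*$ is trace-class with $\tr(B_i\rho B_i^*)=\tr(\rho^{1/2}B_i^*B_i\rho^{1/2})$, using positivity of the terms and monotone convergence to get $\sup_N\tr(S_N)=\tr(\rho^{1/2}T\rho^{1/2})=\tr(\rho T)$, and then observing that a nondecreasing sequence of positive trace-class operators with convergent traces is Cauchy in trace norm --- is the standard one, and every step (the Hilbert--Schmidt bookkeeping, the cyclicity used, the identification of the strong limit with the trace-norm limit) checks out.
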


\bigskip
The operators $B^i_j$ act on $\rH$ only, we dilate them as operators on $\rH\otimes\rK$ by putting
$$
M^i_j=B^i_j\otimes \vert i\rangle\langle j\vert\,.
$$
The operator $M^i_j$ encodes exactly the idea that while passing from $\vert j\rangle$ to $\vert i\rangle$ on the lattice, the effect is the operator $B^i_j$ on $\rH$. 

By Lemma \ref{L:2}, which follows,  the series $\sum_{i,j} {M^i_j}^* M^i_j$ converges strongly to the operator $I$. 

\begin{lemma}\label{L:2}
Let $\rK$ and $\rH$ be separable Hilbert spaces. Consider an orthonormal basis $(\vert i\rangle)$ of $\rK$. Assume that  $B^i_j$ are bounded operators on $\rH$ such that, for all $j$, the series 
$
\sum_i {B^i_j}^*B^i_j
$
is strongly convergent to $I$. Define the bounded operators 
$$
M^i_j=B^i_j\otimes \vert i\rangle\langle j\vert
$$ 
on $\rH\otimes\rK$. Then the series
$
\sum_{i,j} {M^i_j}^*M^i_j
$
converges strongly to $I$.
\end{lemma}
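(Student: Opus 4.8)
The plan is to reduce the statement to a one-line algebraic identity for $ {M^i_j}^* M^i_j$ and then to identify the strong limit of the net of finite partial sums. First I would compute, for each pair $(i,j)$,
$$
{M^i_j}^* M^i_j=\bigl({B^i_j}^*\otimes\vert j\rangle\langle i\vert\bigr)\bigl(B^i_j\otimes\vert i\rangle\langle j\vert\bigr)={B^i_j}^* B^i_j\otimes\vert j\rangle\langle j\vert\,,
$$
using $\langle i\vert i\rangle=1$. In particular every ${M^i_j}^* M^i_j$ is a positive operator, so the partial sums $S_G=\sum_{(i,j)\in G}{M^i_j}^* M^i_j$ over finite subsets $G\subset\rV\times\rV$ form a net that is increasing in $G$, and it suffices to control this monotone net.

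Next I would establish the uniform bound $S_G\le I$. For a fixed $j$ and a finite $F\subset\rV$, the strong (hence weak) convergence of $\sum_i{B^i_j}^* B^i_j$ to $I$ gives $\langle\eta,(\sum_{i\in F}{B^i_j}^* B^i_j)\eta\rangle=\sum_{i\in F}\|B^i_j\eta\|^2\le\|\eta\|^2$ for every $\eta\in\rH$, i.e.\ $0\le\sum_{i\in F}{B^i_j}^* B^i_j\le I$. Grouping the terms of $S_G$ according to their second index and using that the projections $\vert j\rangle\langle j\vert$ sit on mutually orthogonal subspaces of $\rK$, one gets $S_G\le I$. An increasing net of self-adjoint operators that is bounded above converges strongly to its supremum, so $S_G$ converges strongly to some operator $S$ with $0\le S\le I$.

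It then remains to check $S=I$. Given $\xi\in\rH\otimes\rK$, decompose $\xi=\sum_j\eta_j\otimes\vert j\rangle$ with $\sum_j\|\eta_j\|^2=\|\xi\|^2<\infty$. Since $M^i_j\xi=B^i_j\eta_j\otimes\vert i\rangle$, all the terms below are nonnegative and may be summed in any order:
$$
\langle\xi,S\xi\rangle=\sum_{i,j}\|M^i_j\xi\|^2=\sum_j\sum_i\|B^i_j\eta_j\|^2=\sum_j\Bigl\langle\eta_j,\Bigl(\sum_i{B^i_j}^* B^i_j\Bigr)\eta_j\Bigr\rangle=\sum_j\|\eta_j\|^2=\|\xi\|^2\,,
$$
where the inner identity $\sum_i\|B^i_j\eta_j\|^2=\|\eta_j\|^2$ is once more the weak convergence of $\sum_i{B^i_j}^* B^i_j$ to $I$. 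Hence $\langle\xi,(I-S)\xi\rangle=0$ for all $\xi$, and since $I-S\ge0$ this forces $S=I$.

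The only genuinely delicate point is the convergence bookkeeping: one must read ``strongly convergent series'' over the countable sets $\rV$ and $\rV\times\rV$ as limits of the nets of finite partial sums (unambiguous here since all terms are positive), invoke the monotone-convergence theorem for bounded increasing nets of self-adjoint operators, and justify the rearrangement of the nonnegative double series in the last display; everything else is the algebraic identity for ${M^i_j}^* M^i_j$ plus the hypothesis on the $B^i_j$. If one prefers to avoid the monotone-net theorem, the same conclusion follows from a direct $\varepsilon$-estimate: given $\xi=\sum_j\eta_j\otimes\vert j\rangle$, pick $N$ with $\sum_{j>N}\|\eta_j\|^2<\varepsilon$ and, for each $j\le N$, a finite $I_j$ with $\|(I-\sum_{i\in I_j}{B^i_j}^* B^i_j)\eta_j\|<\varepsilon$, and verify that $\|\xi-S_G\xi\|$ is small for every finite $G\supset\{(i,j):j\le N,\ i\in I_j\}$, using only $0\le\sum_{i\in F}{B^i_j}^* B^i_j\le I$ for finite $F$.
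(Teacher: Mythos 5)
Your proof is correct, and since the paper explicitly omits the proofs of the lemmas in Section \ref{S:setup} as ``easy exercises of Analysis,'' there is no argument of the authors' to compare against; your route --- the identity ${M^i_j}^*M^i_j={B^i_j}^*B^i_j\otimes\vert j\rangle\langle j\vert$, the uniform bound $S_G\le I$ via block-diagonality, Vigier's theorem for the increasing net, and identification of the limit through the quadratic form --- is exactly the standard argument the authors presumably had in mind. You are also right to flag, and correctly resolve, the one point that needs care, namely interpreting the doubly indexed series as a net of finite partial sums, which is unambiguous here because all terms are positive.
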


\smallskip
As a consequence we can apply Lemma \ref{L:1} to the set of operators $(M^i_j)_{i,j}$ and the mapping 
\begin{equation}\label{E:rL}
\rM(\r)=\sum_i\sum_j M^i_j\,\r\, {M^i_j}^*
\end{equation}
defines a completely positive map on $\rH\otimes\rK$. 

\smallskip
We shall especially be interested in density matrices on $\rH\otimes\rK$ with the particular form
\begin{equation}\label{rho}
\r=\sum_i \r_i\otimes \vert i\rangle\langle i\vert\,,
\end{equation}
where each $\r_i$ is not exactly a density matrix on $\rH$: it is a positive and trace-class operator but its trace is not 1. Indeed the condition that $\r$ is a state aims to
\begin{equation}\label{sumri}
\sum_i \tr(\r_i)=1\,.
\end{equation} 
The importance of those density matrices is justified by the following.

\begin{proposition}\label{P:diagonal}
Whatever is the initial state $\r$ on $\rH\otimes\rK$, the density matrix $\rM(\r)$ is of the form (\ref{rho}).
\end{proposition}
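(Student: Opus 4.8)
The plan is to compute $\rM(\r)$ directly from the definition \eqref{E:rL} and observe that the structure of the dilated operators $M^i_j = B^i_j \otimes \vert i\rangle\langle j\vert$ forces the output to be diagonal in the $\rK$-basis. First I would write an arbitrary density matrix $\r$ on $\rH \otimes \rK$ in block form relative to the orthonormal basis $(\vert i\rangle)$, namely $\r = \sum_{k,l} \r_{kl} \otimes \vert k\rangle\langle l\vert$ where each $\r_{kl}$ is a bounded (trace-class) operator on $\rH$; the diagonal blocks $\r_{kk}$ are positive and $\sum_k \tr(\r_{kk}) = 1$. Then for each pair $(i,j)$ I would compute
\[
M^i_j\,\r\,{M^i_j}^* = \bigl(B^i_j \otimes \vert i\rangle\langle j\vert\bigr)\Bigl(\sum_{k,l}\r_{kl}\otimes\vert k\rangle\langle l\vert\Bigr)\bigl({B^i_j}^* \otimes \vert j\rangle\langle i\vert\bigr).
\]
Using $\langle j\vert k\rangle = \d_{jk}$ and $\langle l\vert j\rangle = \d_{lj}$, the only surviving term in the double sum is $k=l=j$, giving $B^i_j\,\r_{jj}\,{B^i_j}^* \otimes \vert i\rangle\langle i\vert$.

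Summing over all $i$ and $j$ then yields
\[
\rM(\r) = \sum_{i,j} B^i_j\,\r_{jj}\,{B^i_j}^* \otimes \vert i\rangle\langle i\vert = \sum_i \Bigl(\sum_j B^i_j\,\r_{jj}\,{B^i_j}^*\Bigr) \otimes \vert i\rangle\langle i\vert,
\]
which is exactly of the form \eqref{rho} with $\r'_i = \sum_j B^i_j\,\r_{jj}\,{B^i_j}^*$. It remains to check that this is a legitimate state: each $\r'_i$ is positive as a norm-convergent sum of positive operators (invoking Lemma \ref{L:1} applied to the family $(M^i_j)$ to guarantee trace-norm convergence of the whole series and hence of each piece), and $\sum_i \tr(\r'_i) = \tr(\rM(\r)) = \tr(\r) = 1$ again by Lemma \ref{L:1} since $\sum_{i,j}{M^i_j}^*M^i_j = I$ by Lemma \ref{L:2}.

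The only genuinely delicate point is the manipulation of infinite sums when $\rV$ is countably infinite: one must justify interchanging the sum over $(i,j)$ with the block decomposition of $\r$, and argue that the off-diagonal blocks $\r_{kl}$ with $k \neq l$ truly contribute nothing even though individually the intermediate expressions are only conditionally controlled. This is handled cleanly by Lemma \ref{L:1}: since $\sum_{i,j} M^i_j\,\r\,{M^i_j}^*$ is trace-norm convergent, the rearrangement is unconditional, and one may evaluate the sum by first grouping over $j$ (picking out $\r_{jj}$) and then over $i$. With that in hand the proof is essentially a one-line computation; I do not anticipate any substantive obstacle beyond the bookkeeping that Lemmas \ref{L:1} and \ref{L:2} already absorb.
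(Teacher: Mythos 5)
Your proposal is correct and follows essentially the same route as the paper: the core step in both is that conjugation by $M^i_j=B^i_j\otimes\vert i\rangle\langle j\vert$ picks out the $(j,j)$ diagonal block of $\r$ and deposits $B^i_j\,\r_{jj}\,{B^i_j}^*$ at site $i$, with Lemmas \ref{L:1} and \ref{L:2} supplying the trace-norm convergence. The only cosmetic difference is that the paper extracts the diagonal block via the partial-trace Lemma \ref{L:3} (writing $(I\otimes\vert i\rangle\langle j\vert)\,\r\,(I\otimes\vert j\rangle\langle i\vert)=\r_j\otimes\vert i\rangle\langle i\vert$ directly) rather than by expanding $\r$ into the doubly-indexed block series, which sidesteps the convergence caveat you flag at the end.
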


Before proving this proposition, let us recall a basic result on partial traces.

\begin{lemma}\label{L:3}
Let $\r$ be a trace-class operator on $\rH\otimes\rK$ and $(\vert j\rangle)$ be an orthonormal basis of $\rK$. The operator 
$$
(I\otimes \vert i\rangle \langle j\vert)\,\r\,(I\otimes \vert j\rangle\langle i\vert)
$$
can be written as
$$
\r_j\otimes \vert i\rangle\langle i\vert
$$
for some trace-class operator $\r_j$ on $\rH$, which we shall denote by $\langle j\vert\,\r\,\vert j\rangle$. Furthermore we have
$$
\tr(\langle j\vert\,\r\,\vert j\rangle)=\tr\big(\r\, (I\otimes\vert j\rangle\langle j\vert)\big)\,.
$$
\end{lemma}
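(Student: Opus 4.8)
The plan is to realize the two ``partial'' operators appearing in the statement as compositions of a partial ket map and a partial bra map, and then to read off the result by a direct computation on simple tensors. Concretely, for each basis vector $\vert k\rangle$ I write $V_k$ for the bounded isometry $\rH\to\rH\otimes\rK$, $h\mapsto h\otimes\vert k\rangle$, and $V_k^*$ for its adjoint $\rH\otimes\rK\to\rH$, which acts by $V_k^*(h\otimes\vert l\rangle)=\langle k\vert l\rangle\,h$. With this notation $I\otimes\vert i\rangle\langle j\vert=V_iV_j^*$ and $I\otimes\vert j\rangle\langle i\vert=V_jV_i^*$, so that the operator in the statement factors as
$$
(I\otimes\vert i\rangle\langle j\vert)\,\r\,(I\otimes\vert j\rangle\langle i\vert)=V_i\,(V_j^*\,\r\,V_j)\,V_i^*\,.
$$
I would then set $\r_j:=V_j^*\,\r\,V_j$, a bounded operator on $\rH$ (this is exactly the object the statement calls $\langle j\vert\,\r\,\vert j\rangle$), and verify the asserted factorization by evaluating $V_i\,\r_j\,V_i^*$ on a simple tensor $h\otimes\vert l\rangle$: one gets $V_i^*(h\otimes\vert l\rangle)=\langle i\vert l\rangle\,h$, then applies $\r_j$, then reattaches $\vert i\rangle$ through $V_i$, producing $(\r_j h)\otimes\vert i\rangle\,\langle i\vert l\rangle$, which is precisely $(\r_j\otimes\vert i\rangle\langle i\vert)(h\otimes\vert l\rangle)$. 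By linearity and continuity this identifies the sandwiched operator with $\r_j\otimes\vert i\rangle\langle i\vert$, giving the first assertion.

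Next I would argue that $\r_j$ is trace-class. Since the trace-class operators form a two-sided ideal in the bounded operators and both $V_j$ and $V_j^*$ are bounded, the product $V_j^*\,\r\,V_j$ is trace-class on $\rH$ whenever $\r$ is trace-class on $\rH\otimes\rK$. This is the only place where the hypothesis on $\r$ enters, and it requires no separate estimate.

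Finally, for the trace identity I would fix an orthonormal basis $(e_\a)$ of $\rH$, so that $(e_\a\otimes\vert k\rangle)_{\a,k}$ is an orthonormal basis of $\rH\otimes\rK$. Computing $\tr(\r_j)=\sum_\a\ps{e_\a}{\r_j e_\a}=\sum_\a\ps{e_\a\otimes\vert j\rangle}{\r\,(e_\a\otimes\vert j\rangle)}$ on one side, and expanding $\tr\big(\r\,(I\otimes\vert j\rangle\langle j\vert)\big)$ in the same basis on the other, one sees that $I\otimes\vert j\rangle\langle j\vert$ annihilates every basis vector with $k\neq j$ and fixes those with $k=j$, so only the diagonal terms survive and the two sums coincide. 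Equivalently, this is cyclicity of the trace, $\tr(V_j^*\,\r\,V_j)=\tr(\r\,V_jV_j^*)$, together with the identity $V_jV_j^*=I\otimes\vert j\rangle\langle j\vert$. I expect the only genuinely delicate points to be bookkeeping in infinite dimensions: checking that $V_k,V_k^*$ are the correct bounded realizations of the partial ket and bra, and that cyclicity of the trace is legitimate in the trace-class setting; both are standard.
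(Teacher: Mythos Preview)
Your argument is correct. The factorization $I\otimes\vert i\rangle\langle j\vert=V_iV_j^*$ via the isometries $V_k:h\mapsto h\otimes\vert k\rangle$ is the standard way to make the partial bra/ket rigorous, and from it the identification $(I\otimes\vert i\rangle\langle j\vert)\,\r\,(I\otimes\vert j\rangle\langle i\vert)=V_i(V_j^*\r V_j)V_i^*=\r_j\otimes\vert i\rangle\langle i\vert$ follows exactly as you wrote. The trace-class claim is fine: the trace-class operators form an ideal also in the category sense (bounded$\,\cdot\,$trace-class$\,\cdot\,$bounded between possibly different Hilbert spaces is again trace-class), so $V_j^*\r V_j$ is trace-class on $\rH$. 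The trace identity via $\tr(V_j^*\r V_j)=\tr(\r\,V_jV_j^*)$ and $V_jV_j^*=I\otimes\vert j\rangle\langle j\vert$ is the cleanest route, and your alternative basis computation is equally valid.

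As for comparison: the paper does not prove this lemma. It is announced as a ``basic result on partial traces'' and, in line with the remark at the start of Section~\ref{S:setup}, its proof is explicitly omitted as an easy exercise of Analysis. Your write-up is precisely the routine verification the authors had in mind, so there is nothing to contrast.
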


\bigskip
We can now come back to the proof of the proposition.\\
\begin{proof}[of Proposition \ref{P:diagonal}]
We have
\begin{align*}
\rM(\r)&=\sum_{i,j}\left(B^i_j \otimes \vert i\rangle\langle j\vert\right)\,\r\, \left({B^i_j}^*\otimes\vert j\rangle\langle i\vert\right)\\
&=\sum_{i,j} (B^i_j \otimes I)(I\otimes \vert i\rangle\langle j\vert)\,\r\, (I\otimes\vert j\rangle\langle i\vert)({B^i_j}^*\otimes I)\,.
\end{align*}
If we put $\r_j=\langle j\vert\,\r\,\vert j\rangle$ (as in Lemma \ref{L:3}), we get
\begin{align*}
\rM(\r)&=\sum_{i,j} (B^i_j \otimes I)(\r_j\otimes \vert i\rangle\langle i\vert)({B^i_j}^*\otimes I)\\
&=\sum_{i,j} B^i_j \r_j{B^i_j}^*\otimes \vert i\rangle\langle i\vert\,.
\end{align*}
Each of the operators $B^i_j \r_j{B^i_j}^*$ is positive and trace-class, hence so is the operator $\sum_{j\leq M} B^i_j \r_j{B^i_j}^*$. But we have
$$
\tr\left(\sum_{j\leq M} B^i_j \r_j{B^i_j}^*\right)=\sum_{j\leq M}\tr(\r_j{B^i_j}^*B^i_j )\,.
$$
As $\sum_i {B^i_j}^*B^i_j=I$, each of the operators ${B^i_j}^*B^i_j$ is smaller than $I$ (in the sense that $I-{B^i_j}^*B^i_j$ is a positive operator). Hence, $\tr(\r_j{B^i_j}^*B^i_j )\leq \tr(\r_j)$, as can be easily checked. This shows that 
$$
\sum_{j}\tr\left( B^i_j \r_j{B^i_j}^*\right)<\infty
$$
and that $\sum_{j\leq M} B^i_j \r_j{B^i_j}^*$ converges in trace-norm to a positive trace-class operator $\sum_{j} B^i_j \r_j{B^i_j}^*$ which satisfies
$$
\tr\left(\sum_{j} B^i_j \r_j{B^i_j}^*\right)=\sum_{j}\tr(\r_j{B^i_j}^*B^i_j )\,.
$$
In particular by Lemma \ref{L:1} we have, 
\begin{align*}
\sum_i\tr\left(\sum_{j} B^i_j \r_j{B^i_j}^*\right)&=\sum_i\sum_{j}\tr(\r_j{B^i_j}^*B^i_j )\\
&=\sum_j \tr\left(\r_j\left(\sum_i{B^i_j}^*B^i_j\right)\right)\\
&=\sum_j \tr(\r_j)\\
&=1\,.
\end{align*}
This means that the series (in the variable $i$)
$$
\sum_i\left(\sum_{j} B^i_j \r_j{B^i_j}^*\right)\otimes \vert i\rangle\langle i\vert
$$
is trace-norm convergent. We now immediately have the relation
$$
\sum_{i,j} B^i_j \r_j{B^i_j}^*\otimes \vert i\rangle\langle i\vert=\sum_i\left(\sum_{j} B^i_j \r_j{B^i_j}^*\right)\otimes \vert i\rangle\langle i\vert\,.
$$
This proves that $\rM(\r)$ is of the form (\ref{rho}).
\end{proof}

\bigskip
The states of the form (\ref{rho}) are mixtures of initial states $\r_i$ on each site $i$, but  they express no mixing between the sites. An immediate consequence of the proof of Proposition \ref{P:diagonal} is the following important formula.

\begin{corollary}\label{C:Lrrho}
If $\r$ is a state on $\rH\otimes\rK$ of the form
$$
\r=\sum_i \r_i\otimes \vert i\rangle\langle i\vert\,,
$$
then
\begin{equation}\label{E:rLrho}
\rM(\r)=\sum_i \left(\sum_j B^i_j \r_j {B^i_j}^*\right)\otimes \vert iÊ\rangle\langle i\vert\,.
\end{equation}
\end{corollary}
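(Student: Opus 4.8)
The plan is to read off the result directly from the computation already performed in the proof of Proposition~\ref{P:diagonal}. Indeed, that proof does not merely show that $\rM(\r)$ has the form (\ref{rho}) for an arbitrary state $\r$; along the way it establishes the exact identity
$$
\rM(\r)=\sum_{i,j} B^i_j\, \langle j\vert\,\r\,\vert j\rangle\, {B^i_j}^*\otimes \vert i\rangle\langle i\vert\,,
$$
where $\langle j\vert\,\r\,\vert j\rangle$ is the partial-trace operator supplied by Lemma~\ref{L:3}. So the only thing left to do is to evaluate $\langle j\vert\,\r\,\vert j\rangle$ when $\r$ itself is already of the diagonal form $\r=\sum_k \r_k\otimes\vert k\rangle\langle k\vert$.

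First I would apply the defining formula of Lemma~\ref{L:3} to this particular $\r$: compute $(I\otimes\vert j\rangle\langle j\vert)\,\r\,(I\otimes\vert j\rangle\langle j\vert)$ and observe that, because $\langle j\vert k\rangle=\d_{jk}$, all terms with $k\neq j$ vanish and only $\r_j\otimes\vert j\rangle\langle j\vert$ survives. Hence $\langle j\vert\,\r\,\vert j\rangle=\r_j$. (One should take a line to note that this is legitimate: each $\r_j$ is positive and trace-class, and $\sum_j\tr(\r_j)=1<\infty$ guarantees the series defining $\r$ converges in trace norm, so the manipulations term by term are justified, and the convergence issues for the image series were already settled inside the proof of Proposition~\ref{P:diagonal}.)

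Substituting $\langle j\vert\,\r\,\vert j\rangle=\r_j$ into the displayed identity above gives
$$
\rM(\r)=\sum_{i,j} B^i_j\, \r_j\, {B^i_j}^*\otimes \vert i\rangle\langle i\vert
=\sum_i\Bigl(\sum_j B^i_j\, \r_j\, {B^i_j}^*\Bigr)\otimes \vert i\rangle\langle i\vert\,,
$$
which is exactly (\ref{E:rLrho}); the regrouping of the double sum into the iterated sum is precisely the last step already carried out in the proof of Proposition~\ref{P:diagonal}.

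I do not expect any real obstacle here: the corollary is essentially a bookkeeping specialization of work already done, and the author even announces it as "an immediate consequence of the proof." The only point requiring a modicum of care is making sure the partial trace $\langle j\vert\,\r\,\vert j\rangle$ is correctly identified with $\r_j$ using Lemma~\ref{L:3}, and that the interchange of summations over $i$ and $j$ is covered by the trace-norm convergence established earlier; both are routine.
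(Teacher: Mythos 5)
Your proof is correct and follows exactly the route the paper intends: the identity $\rM(\r)=\sum_{i,j}B^i_j\,\langle j\vert\,\r\,\vert j\rangle\,{B^i_j}^*\otimes\vert i\rangle\langle i\vert$ is already established in the proof of Proposition~\ref{P:diagonal}, and specializing $\langle j\vert\,\r\,\vert j\rangle=\r_j$ for a diagonal state is all that remains. The paper itself offers no separate argument beyond calling this ``an immediate consequence of the proof,'' so your write-up matches its approach.
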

This is exactly the quantum analogue of a usual random walk: after one step, on the site $i$ we have all the contributions from those pieces of the state which have travelled from $j$ to $i$.

\section{Open Quantum Random Walks}
If the state of the system $\rH\otimes\rK$ is of the form 
$$
\r=\sum_i \r_i\otimes\vert i\rangle\langle i\vert\,,
$$
then a measurement of the ``position" in $\rK$, that is, a measurement along the orthonormal basis $(\vert i\rangle)_{i\in\rV}$, gives the value $\vert i\rangle$ with probability 
$$
\tr(\r_i)\,.
$$
As proved in Corollary \ref{C:Lrrho}, after applying the completely positive map $\rM$ the state of the system $\rH\otimes\rK$ is
$$
\rM(\r)=\sum_{i}\sum_j B^i_j\, \r_j \,{B^i_j}^*\otimes \vert i\rangle\langle i\vert\,.
$$
Hence, a measurement of the position in $\rK$ would give that each site $i$ is occupied with probability
\begin{equation}\label{E:pi}
\sum_j \tr\left(B^i_j \r_j {B^i_j}^*\right)\,.
\end{equation}

\smallskip
Now, let us see what happens if the measurement is performed after two steps only. In this case, the state of the system is
$$
\rM^2(\r)=\sum_{i}\sum_j\sum_k B^i_j B^j_k\, \r_k\, {B^j_k}^*{B^i_j}^*\otimes \vert i\rangle\langle i\vert\,.
$$
Hence measuring the position, we get the site $\vert i\rangle$ with probability
\begin{equation}\label{E:L2}
\sum_j\sum_k \tr\left(B^i_jB^j_k \,\r_k \,{B^j_k}^*{B^i_j}^*\right)\,.
\end{equation}
Clearly, there is no way to understand the probability measure given in (\ref{E:L2}) for two steps with the help of only the probability measure on one step (\ref{E:pi}). One needs to know the density matrices $\r_k$.

\smallskip
The random walk which is described this way by the iteration of the completely positive map $\rM$ is not a classical random walk, it is a quantum random walk. The rules for jumping from a site to another are dictated by the sites, but also by the chirality. This is what we call an ``Open Quantum Random Walk". 

Let us resume these remarks in the following proposition, which follows easily from the previous results and remarks.

\begin{proposition}\label{P:DQRW}
Given any initial state $\r^{(0)}$ on $\rH\otimes\rK$, then for all $n\geq 1$ the states $\r^{(n)}=\rM^n(\r^{(0)})$ are all of the form
$$
\r^{(n)}=\sum_i \r^{(n)}_i\otimes \vert i\rangle\langle i\vert\,.
$$
They are given inductively by the following relation:
$$
\r^{(n+1)}_i=\sum_j B^i_j\, \r^{(n)}_j\,{B^i_j}^*\,.
$$
For each $n\geq 1$, the quantities 
$$
p^{(n)}_i=\tr(\r^{(n)}_i)\,,\  \ i\in\rV
$$
define a probability distribution $p^{(n)}$ on $\rV$, it is called the ``\emph{probability distribution of the open quantum random walk at time $n$}". 
\end{proposition}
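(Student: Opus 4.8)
The statement is essentially a repackaging of the results already proved, so the plan is a short induction on $n$. For the base case $n=1$: whatever the initial state $\rho^{(0)}$ on $\rH\otimes\rK$, Proposition~\ref{P:diagonal} gives that $\rho^{(1)}=\rM(\rho^{(0)})$ has the form (\ref{rho}); writing $\rho^{(0)}_j=\langle j\vert\,\rho^{(0)}\,\vert j\rangle$ in the sense of Lemma~\ref{L:3}, the computation carried out inside the proof of Proposition~\ref{P:diagonal} moreover exhibits $\rho^{(1)}_i=\sum_j B^i_j\,\rho^{(0)}_j\,{B^i_j}^*$, which is exactly the claimed recursion (and is consistent with the convention $\rho^{(0)}=\sum_j\rho^{(0)}_j\otimes\vert j\rangle\langle j\vert$ in case $\rho^{(0)}$ already has that form). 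For the induction step, suppose $\rho^{(n)}=\sum_i\rho^{(n)}_i\otimes\vert i\rangle\langle i\vert$ is a state of the form (\ref{rho}). Since $\rM$ is completely positive and trace-preserving, $\rho^{(n)}$ is again a state, so Corollary~\ref{C:Lrrho} applies directly and gives
$$
\rho^{(n+1)}=\rM(\rho^{(n)})=\sum_i\Big(\sum_j B^i_j\,\rho^{(n)}_j\,{B^i_j}^*\Big)\otimes\vert i\rangle\langle i\vert,
$$
which is again of the form (\ref{rho}) with the stated recursion. This closes the induction and proves the first two assertions.

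It then remains to check that $p^{(n)}=(p^{(n)}_i)_{i\in\rV}$ is a probability distribution. As $\rM$ is completely positive (formula (\ref{E:rL}) together with Lemma~\ref{L:1}), each $\rho^{(n)}=\rM^n(\rho^{(0)})$ is a positive trace-class operator; by Lemma~\ref{L:3} the operator $\rho^{(n)}_i=\langle i\vert\,\rho^{(n)}\,\vert i\rangle$ is then positive and trace-class on $\rH$, hence $p^{(n)}_i=\tr(\rho^{(n)}_i)\geq0$. Summing over $i$ and using the second identity of Lemma~\ref{L:3},
$$
\sum_i p^{(n)}_i=\sum_i\tr\big(\rho^{(n)}\,(I\otimes\vert i\rangle\langle i\vert)\big)=\tr(\rho^{(n)})=1,
$$
the last equality being the trace-preservation of $\rM$, which follows from Lemma~\ref{L:1} applied to the family $(M^i_j)_{i,j}$ (using $\sum_{i,j}{M^i_j}^*M^i_j=I$ from Lemma~\ref{L:2}); this is precisely the chain of equalities displayed at the end of the proof of Proposition~\ref{P:diagonal}. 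Therefore $p^{(n)}$ is a probability distribution on $\rV$.

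I do not expect any genuine obstacle: all the nontrivial points --- strong convergence of $\sum_i{B^i_j}^*B^i_j$ and of $\sum_{i,j}{M^i_j}^*M^i_j$, trace-norm convergence of the ``forward'' series, and the stability of the form (\ref{rho}) under $\rM$ --- have already been settled in Lemmas~\ref{L:1}--\ref{L:3} and Proposition~\ref{P:diagonal}. The only point that needs slight care, in the countably infinite case, is bookkeeping: one must invoke Proposition~\ref{P:diagonal} for the first (possibly non-diagonal) state $\rho^{(0)}$ and Corollary~\ref{C:Lrrho} for every later state, so that the convergence arguments never have to be redone by hand.
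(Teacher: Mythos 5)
Your proof is correct and follows exactly the route the paper intends: the paper gives no separate argument for Proposition~\ref{P:DQRW}, stating only that it ``follows easily from the previous results,'' and your induction via Proposition~\ref{P:diagonal} (base case), Corollary~\ref{C:Lrrho} (inductive step), and Lemmas~\ref{L:1}--\ref{L:3} (positivity and normalization of $p^{(n)}$) is precisely the intended fleshing-out of that remark.
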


Before going ahead with the properties of the Open Quantum Random Walks, let us introduce a few examples.

\section{Examples on $\ZZ$}\label{S:exampleZ}

It is very easy to define a stationary open quantum random walk on $\ZZ$. Let $\rH$ be any Hilbert space and $B,C$ be two bounded operators on $\rH$ such that 
$$
B^*B+C^*C=I\,.
$$
Then we can define an open quantum random walk on $\ZZ$ by saying that one can only jump to nearest neighbors: a jump to the left is given by $B$ and a jump to the right is given by $C$.
In other words, we put
$$
B^{i-1}_i=B
\qquad\mbox{and}\qquad
B^{i+1}_i=C
$$
for all $i\in\ZZ$, all the others $B^i_j$ being equal to 0.

\smallskip
Starting with an initial state $\r^{(0)}=\r_0\otimes \vert 0\rangle\langle 0\vert$, after one step we have the state
$$
\r^{(1)}=B\r_0B^*\otimes \vert \sm1\rangle\langle \sm1\vert+C\r_0C^*\otimes \vert 1\rangle\langle 1\vert\,.
$$
The probability of presence in $\vert\sm1\rangle $ is $\tr(B\r_0 B^*)$ and the probability of presence in $\vert 1\rangle$ is $\tr(C\r_0C^*)$. 

After the second step, the state of the system is
\begin{align*}
\r^{(2)}&=B^2\r_0{B^2}^*\otimes \vert\sm2\rangle\langle \sm2\vert+C^2\r_0{C^2}^*\otimes \vert 2\rangle\langle 2\vert+\\
&\ \ \ +\left(CB\r_0B^*C^*+BC\r_0C^*B^*\right)\otimes \vert 0\rangle\langle 0\vert\,.
\end{align*}
The associated probabilities for the presence in $\vert \sm2\rangle$, $\vert0\rangle$, $\vert 2\rangle$ are then 
$$
\tr(B^2\r_0{B^2}^*),\ \ \ \tr(CB\r_0B^*C^*+BC\r_0C^*B^*)\ \ \mbox{and}\ \ \tr(C^2\r_0{C^2}^*)\,,
$$ 
respectively. 

One can iterate the above procedure and generate our open quantum random walk on $\ZZ$.

\smallskip
As further example, take 
$$
B=\frac{1}{\sqrt3}\,\left(\begin{matrix}1&1\\0&1\end{matrix}\right)
\qquad\mbox{and}\qquad
C=\frac{1}{\sqrt3}\,\left(\begin{matrix}1&0\\-1&1\end{matrix}\right)\,.
$$
The operators $B$ and $C$ do satisfy $B^*B+C^*C=I$. Let us consider the associated open quantum random walk on $\ZZ$. Starting with the state 
$$
\r^{(0)}=\left(\begin{matrix}1&0\\0&0\end{matrix}\right)\otimes \vert0\rangle\langle0\vert\,,
$$
we find the following probabilities for the 4 first steps:
$$
\begin{matrix} 
&\vert-4\rangle&\vert-3\rangle&\vert-2\rangle&\vert-1\rangle&\vert0\rangle&\vert+1\rangle&\vert+2\rangle&\vert+3\rangle&\vert+4\rangle\\
n=0&&&&&1&&&&\\
n=1&&&&
\frac 13&&\frac 23&&&\\
n=2&&&
\frac 19&&\frac 39&& \frac 59&&\\
n=3
&&\frac 1{27}&& \frac 5{27}&&\frac{11}{27}&&\frac{10}{27}&\\
n=4&\frac 1{81}&&\frac{10}{81}&&\frac{27}{81}&&\frac{26}{81}&&\frac{17}{81}\\
\end{matrix} 
$$

\smallskip
The distribution obviously starts asymmetric, uncentered and rather wild. The interesting point is that, while keeping its quantum behavior time after time, simulations show up clearly a tendancy to converge to a normal centered distribution. 

A proof of this fact, in this particular example, is accessible making use of the translation invariance of the walk and using Fourier transform techniques, in the same way as as is done (though, in a quite different language) in \cite{werner}. It is to be noticed that such methods do not give explicit or  accessible parameters for the limit Gaussian distribution. Indeed, such methods are based on the spectral behavior near the origin of a particular perturbation of the dynamics; the associated parameters are then in general hard to compute. Another approach to these central limit behaviors is developed in \cite{AGS}, based on Markov chain and random media techniques; it gives rise to more general results than \cite{werner} and it gives rise to explicit parameters for the Gaussian limit distributions. 

\smallskip
Coming back to our example, Figure 1 below shows the distribution obtained at times $n=4$, $n=8$ and $n=20$.

\begin{figure}[h!]
\begin{center}
\leavevmode
{%
      \begin{minipage}{0.4\textwidth}
        \includegraphics[width=4cm,height=3.2cm]
        {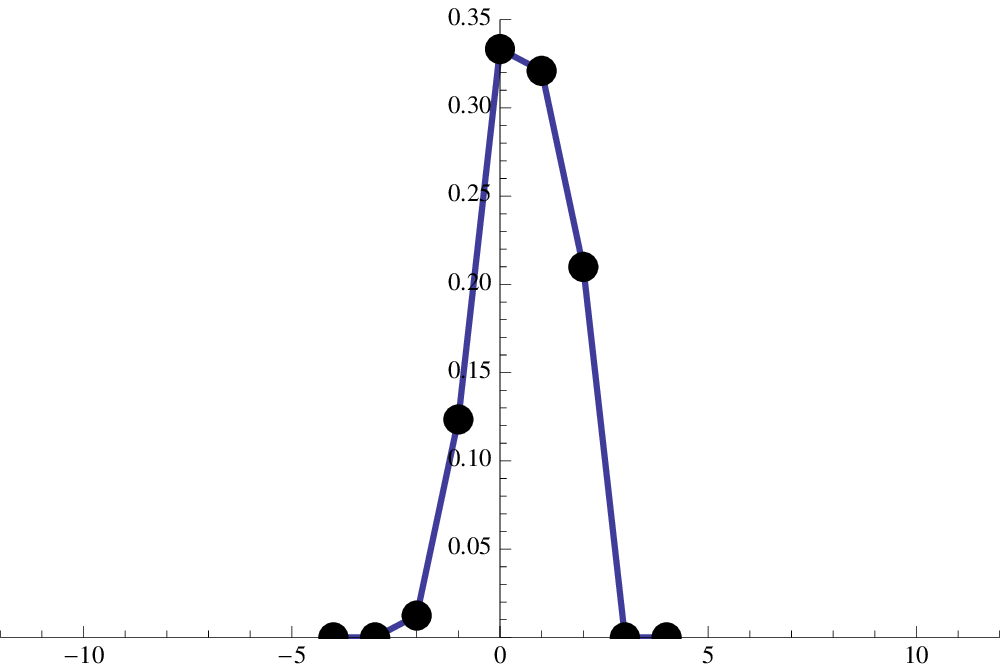}\\
     \vspace{-0.9cm} \strut
      
        \end{minipage}}\hspace*{-0.8cm}
{%
      \begin{minipage}{0.4\textwidth}
        \includegraphics[width=4cm,height=3.2cm]
         {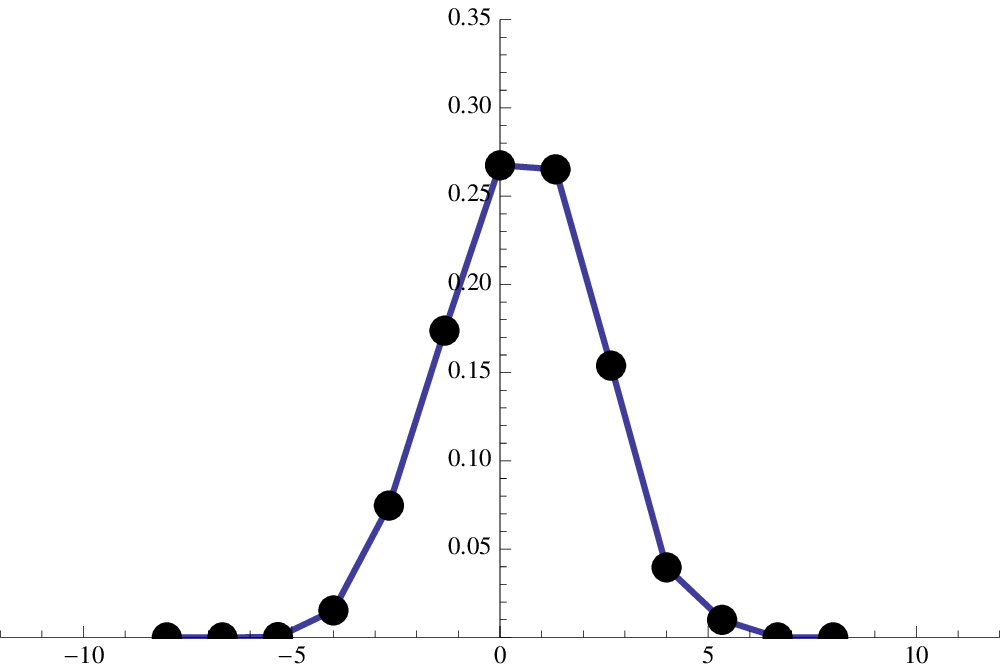}\\
        \vspace{-0.9cm} \strut
        \end{minipage}}\hspace*{-0.8cm}
{%
      \begin{minipage}{0.4\textwidth}
        \includegraphics[width=4cm,height=3.2cm]
        {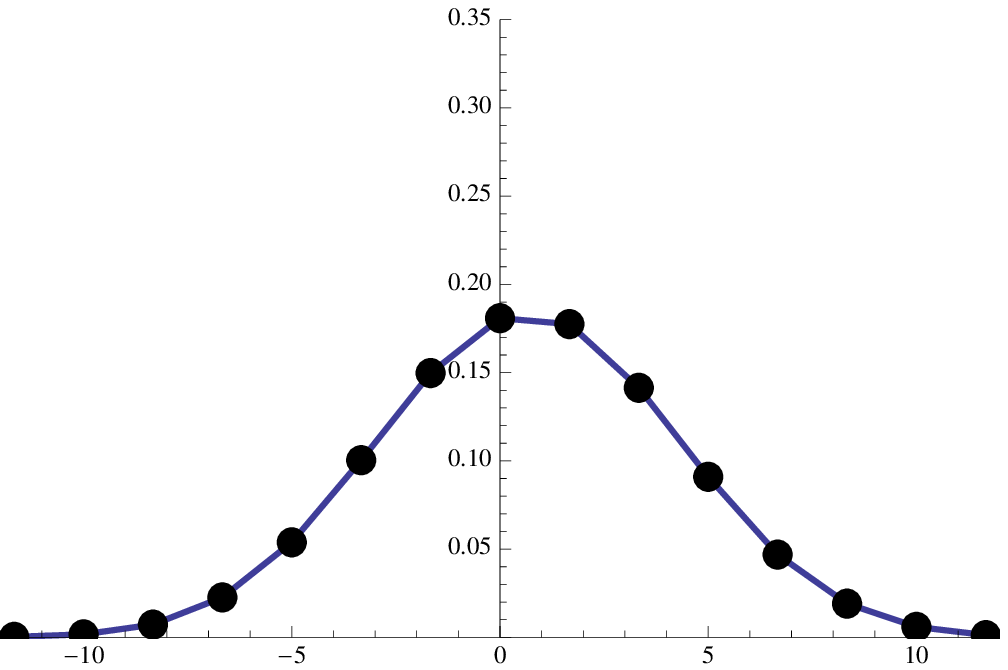}\\
      \vspace{-0.9cm} \strut
       \end{minipage}}
\end{center}
\vskip -0.1cm
\caption{\it An O.Q.R.W. on $\ZZ$ which gives rise to a centered Gaussian at the limit, while starting clearly uncentered (at time 
$n=4$, $n=8$, $n=20$)}
\end{figure} 

\bigskip
One can produce examples where several Gaussians are appearing, including the case where Gaussians are reduced to Dirac masses. It is to be noted that the rigorous proof of such asymptotic behaviors, where several Gaussians are appearing, is out of reach with simple arguments like Fourier transforms. In the article \cite{werner}, cited above, their techniques only allow to consider the case of a single Gaussian distribution at the limit. A detailed study of Central Limit Theorems for O.Q.R.W. is presented in \cite{AGS}, where some of the cases with several Gaussians are considered. A general Central Limit Theorem for O.Q.R.W. is, at the time we write, still an open problem. 
In this article, we stick to numerical simulations only and we refer to \cite{AGS} for a detailed study of central limit theorems in this context. 

\smallskip
For example, let us take $\rH$ being of dimension 5. For the sake of a compact notation, we put $C_2=\cos(2t)$, $C_4=\cos(4t)$, $S_2=\sin(2t)$ and $S_4=\sin(4t)$. Consider the matrices
$$
B=
\frac{1}{4}\left(
\begin{matrix}
 0 & -2 S_2-S_4 & 0 & 2 S_2-S_4 & 0 \\
-2 S_2-S_4 & 0 & -2\sqrt{\frac{3}{2}} S_4 & 0 & 2 S_2-S_4 \\
 0 & -2\sqrt{\frac{3}{2}} S_4 & 0 & -2\sqrt{\frac{3}{2}} S_4 & 0 \\
2 S_2-S_4 & 0 & -2\sqrt{\frac{3}{2}} S_4 & 0 & -2 S_2-S_4 \\
 0 & 2 S_2-S_4 & 0 & -2 S_2-S_4 & 0
\end{matrix}
\right)
$$
and
$$
C=
 \frac{1}{8}\left(
\begin{matrix}
 L & 0 & C & 0 &L' \\
 0 & 4(C_2+C_4) & 0 & 4(-C_2+C_4) & 0 \\
C & 0 & 2(1+3 C_4) & 0 & C \\
 0 & 4 (-C_2+C_4) & 0 & 4(C_2+C_4) & 0 \\
 L' & 0 & C & 0 & L
\end{matrix}
\right)\,,
$$
where
$$
L=3+4 C_2+C_4,\qq L'=3-4 C_2+C_4,\qq C=-\sqrt{6}(1-C_4)\,.
$$
Simulations of this open quantum random walk indicates that the limit behavior exhibits two Gaussians plus a Dirac soliton. The two Gaussians get slowly constructed, point by point, as the soliton loses its mass. In Figure 2 we show the time evolution when the parameter $t$ is equal to $t=\pi/40$, the initial state being 
$
\r^{(0)}=\frac 15 I\otimes\vert0\rangle\langle 0\vert\,.
$
\begin{figure}[h!]
\begin{center}
\leavevmode
{%
      \begin{minipage}{0.4\textwidth}
        \includegraphics[width=4cm,height=3.2cm]
        {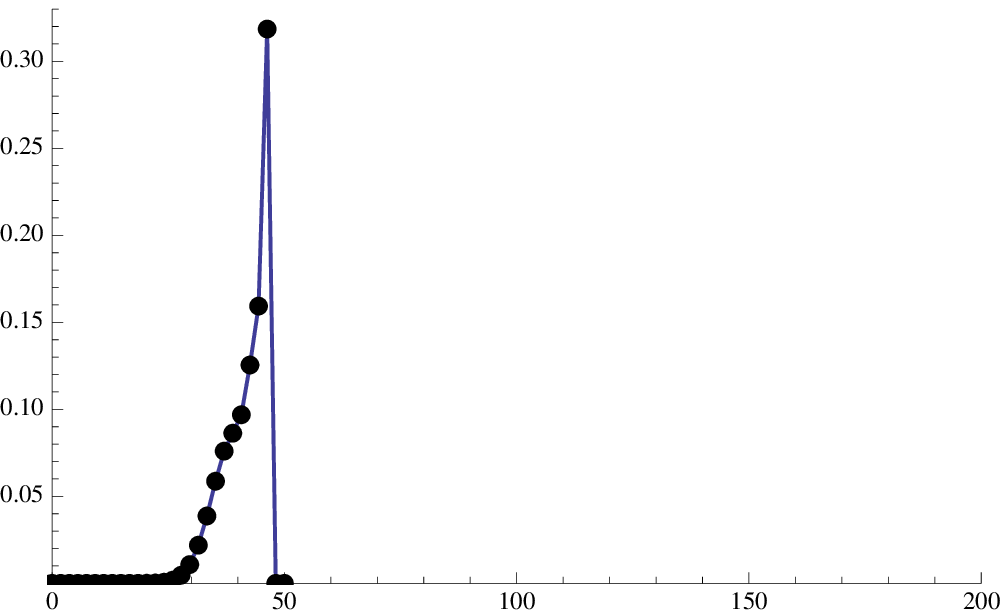}\\
     \vspace{-0.9cm} \strut
      
        \end{minipage}}\hspace*{-0.8cm}
{%
      \begin{minipage}{0.4\textwidth}
        \includegraphics[width=4cm,height=3.2cm]
         {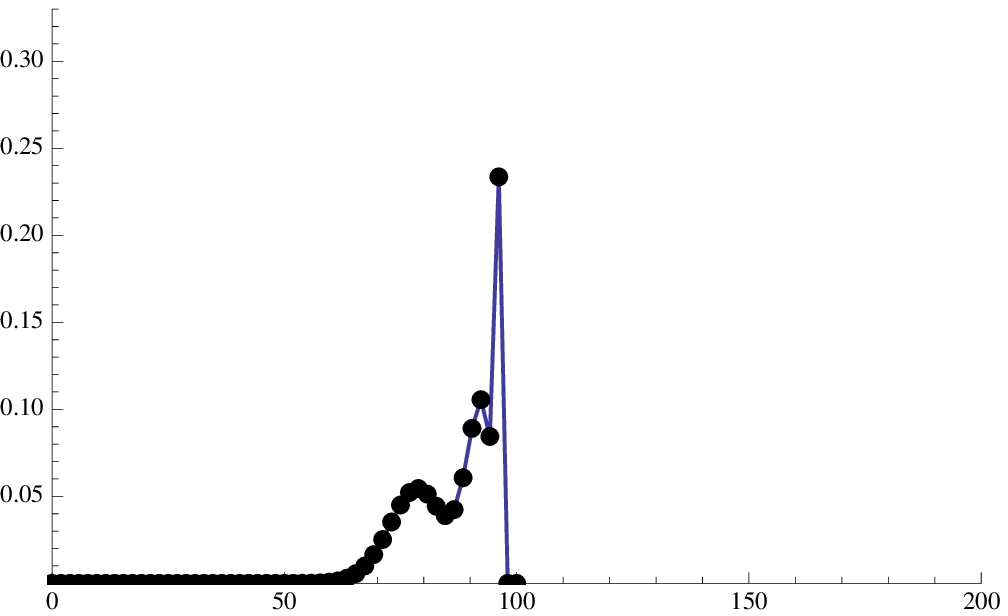}\\
        \vspace{-0.9cm} \strut
        \end{minipage}}\hspace*{-0.8cm}
{%
      \begin{minipage}{0.4\textwidth}
        \includegraphics[width=4cm,height=3.2cm]
        {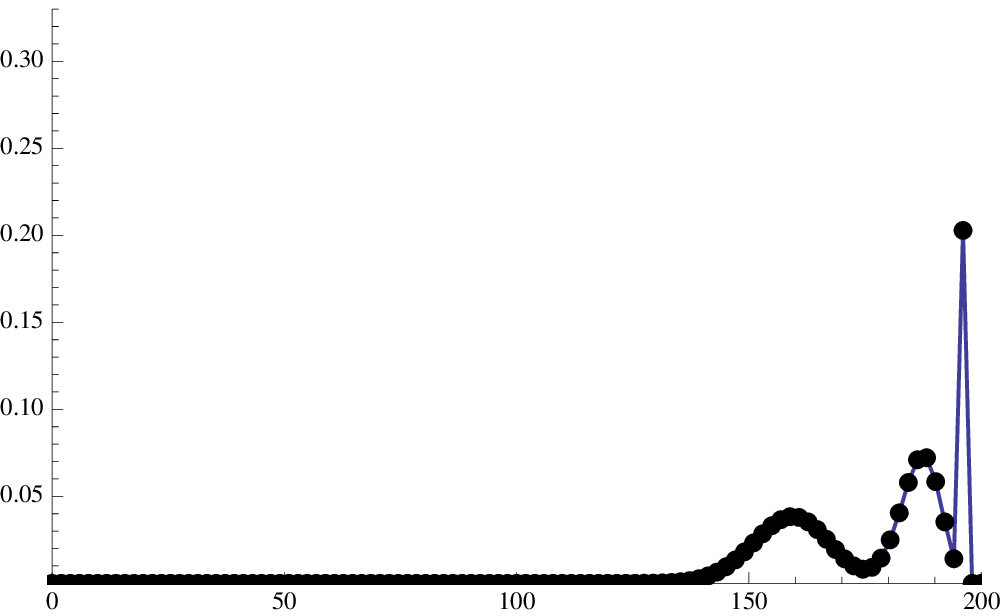}\\
      \vspace{-0.9cm} \strut
        \end{minipage}}
\end{center}
\vskip -0.1cm
\caption{\it Two Gaussians moving towards right are constructed point by point while the soliton loses its mass (parameter $t=\pi/40$, times $n=50$, $n=100$, $n=200$).}
\end{figure} 
Changing the parameter $t$ makes the Gaussians moving at different speeds and even change their direction.

\section{Examples on Graphs}\label{S:example_graph}

In order to give examples on finite graphs it is useful to fix a notation. We shall denote the operators involved in the open quantum random walk in a way similar to the notation of stochastic matrices for Markov chains. If the set of vertices is $\rV=\{1,\ldots,V\}$, we shall denote the operators $B^i_j$ inside a $V\times V$-matrix as follows:
$$
\left(
\begin{matrix}
B^1_1&B^2_1&\ldots&B^V_1\\
B^1_2&B^2_2&\ldots&B^V_2\\
\vdots&\vdots&\vdots&\vdots\\
B^1_V&B^2_V&\ldots&B^V_V
\end{matrix}
\right)\,.
$$
That is, on line $j$ are all the operators for the contributions $B^i_j$ which start from $j$ and go to another site $i$. The usual property for stochastic matrices that the sum of each line is 1, is replaced by 
$
\sum_i {B^i_j}^*B^i_j=I
$
for each line. 
With this notation one can easily describe examples. On the graph with two vertices we consider the transition operators of the form
$$
\left(
\begin{matrix}
D_1&D_2\\
B&C\\
\end{matrix}
\right)
$$
where $D_1$ and $D_2$ are any diagonal matrices such that $D_1^*D_1+D_2^*D_2=I$ and where 
$$
B=\left(\begin{matrix}0&\sqrt p\\0&0\end{matrix}\right)\qquad \mbox{and}\qquad C=\left(\begin{matrix} 1&0\\0&\sqrt{1-p}\end{matrix}\right)\,,
$$
for some $p\in(0,1)$.
It is easy to prove rigorously that the density matrix of these examples always converges to 
$$
\left(\begin{matrix}1&0\\0&0\end{matrix}\right)\otimes\vert 2\rangle\langle 2\vert\,.
$$
This is to say that the ``mass'' is always asymptotically leaving the site $\vert 1\rangle$ in order to ``charge'' the site $\vert 2\rangle$ only. 

The above idea can be pushed further to a chain of $N$ sites connected as follows
$$
\left(\begin{matrix}D_1&D_2&0&0&\ldots&0&0&0\\
D_3&0&D_4&0&\ldots&0&0&0\\
0&D_5&0&D_6&\ldots&0&0&0\\
\vdots&\ddots&\ddots&\ddots&\ldots&\vdots&\vdots&\vdots\\
0&0&0&0&\ldots&D_{2N-3}&0&D_{2N-2}\\
0&0&0&0&\ldots&0&B&C\end{matrix}\right)\,.
$$
Then any initial state, for example any state of the form $\r_0\otimes \vert 1\rangle\langle 1\vert$, will converge to the state
$$
\left(\begin{matrix}1&0\\0&0\end{matrix}\right)\otimes \vert N\rangle\langle N\vert\,.
$$
Though this example is rather classical in its behavior, it is interesting for it gives a model of a sort of ``excitation transport": giving any initial state on the site 1 only, the state will then be, more or less quickly, transported along the chain and will end up into the excited state on the site $\vert N\rangle$, see \cite{R-M} for example.

\bigskip
An important open problem is to be noted in this context. In the case of classical Markov chain on a finite (or even countable) graph, a classification of the behaviors is well-known. The nodes are separated into \emph{recurrent} and \emph{transient} ones. The definition of a recurrent or a transient node is rather simple and clear (a node is transient if, starting from it one can reach another node with strictly positive probability, but with no probability of coming back; if it is not transient a node is recurrent). It is well-known that the invariant measures are exactly supported by the recurrent nodes.

\smallskip
Regarding the few examples described above, it is natural to wonder if there exists some equivalent characterization for OQRW. How can one see on the $B_i^j$'s that a given node will be ``recurrent'', that is, will be in the support of the invariant states? For the moment, no obvious answer appeared to this question; the characterization of recurrence and transience seems not so simple in this quantum context. 

\section{Recovering Classical Markov Chains}
Let us now come back to the general setup of Open Quantum Random Walks.

\smallskip
It is very interesting to notice that all the classical Markov chains can be recovered as particular cases of Open Quantum Random Walks. We only treat here the homogenous case, the discussion would be similar in the non-homogeneous case.

\smallskip
Consider $P=(P(j,i))$ a stochastic matrix, that is $P(j,i)$ are classical probability transitions on $\rV$. They express the transition probabilities of a Markov chain $(X_n)$ on $\rV$, that is,
$$
P(j,i)=\PP(X_{n+1}=i\,\vert\, X_n=j)\,.
$$
In particular, recall that
$$
\sum_{i\in\rV} P(j,i)=1
$$
for all $j$. 

\begin{proposition}\label{P:classical}
Put $\rH=\rK=\CC^\rV$ and consider any family of unitary operators $U^i_j$ on $\CC^N$, $i,j\in\rV$. Consider the operators 
$$
B^i_j=\sqrt{P(j,i)}\, U^i_j\,.
$$
They satisfy
$$
\sum_{i} {B^i_j}^* B^i_j=I
$$
for all $j$. Furthermore, given any initial state $\r^{(0)}$,  the associated open quantum random walk $(\rM^n)$ has the same probability distributions $(p^{(n)})$ as the classical Markov chain $(X_n)$ with transition probability matrix $P$ and initial measure 
$$
p^{(0)}_i= \tr(\langle i\vert\, \r^{(0)}\,\vert i\rangle)\,.
$$
\end{proposition}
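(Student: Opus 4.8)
The plan is to verify the two claims in order. First I would check the normalization condition: since the $U^i_j$ are unitary, ${B^i_j}^* B^i_j = P(j,i)\, {U^i_j}^* U^i_j = P(j,i)\, I$, so $\sum_i {B^i_j}^* B^i_j = \big(\sum_i P(j,i)\big)\, I = I$ because $P$ is stochastic. This part is immediate.

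Second, I would trace through the dynamics using Proposition \ref{P:DQRW}. Write the initial state in the diagonal form $\r^{(0)} = \sum_j \r^{(0)}_j \otimes \vert j\rangle\langle j\vert$ with $\r^{(0)}_j = \langle j\vert\, \r^{(0)}\,\vert j\rangle$ (Lemma \ref{L:3}), so that $p^{(0)}_j = \tr(\r^{(0)}_j)$. By Proposition \ref{P:DQRW}, the iterates satisfy $\r^{(n+1)}_i = \sum_j B^i_j \r^{(n)}_j {B^i_j}^* = \sum_j P(j,i)\, U^i_j \r^{(n)}_j {U^i_j}^*$. Taking traces and using unitarity of $U^i_j$ (so $\tr(U^i_j \r^{(n)}_j {U^i_j}^*) = \tr(\r^{(n)}_j)$), I get
$$
p^{(n+1)}_i = \tr(\r^{(n+1)}_i) = \sum_j P(j,i)\, \tr(\r^{(n)}_j) = \sum_j P(j,i)\, p^{(n)}_j\,.
$$
This is exactly the forward (master) equation for the law of the Markov chain $(X_n)$ with transition matrix $P$; combined with matching initial distributions $p^{(0)}$, an easy induction on $n$ gives $p^{(n)}_i = \PP(X_n = i)$ for all $n$ and all $i$, which is the assertion.

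There is no serious obstacle here; the proof is essentially a one-line computation once the bookkeeping of Proposition \ref{P:DQRW} is in place. The only point requiring a small remark is the convergence of the (possibly infinite) series $\sum_j P(j,i)\, U^i_j \r^{(n)}_j {U^i_j}^*$ in trace norm, but this is already guaranteed by Proposition \ref{P:diagonal} / Corollary \ref{C:Lrrho}, since the $B^i_j$ satisfy the standing hypothesis \eqref{E:rLi}. I would also note explicitly that the choice of the unitaries $U^i_j$ is irrelevant for the probability distributions $p^{(n)}$ — they only affect the internal ($\rH$-valued) part $\r^{(n)}_i$ — so the simplest choice $U^i_j = I$ already realizes the Markov chain; this makes the statement slightly stronger and is worth stressing.
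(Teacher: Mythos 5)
Your proposal is correct and follows essentially the same route as the paper: verify the normalization from unitarity and stochasticity of $P$, then observe that the trace of $\r^{(n+1)}_i=\sum_j P(j,i)\,U^i_j\r^{(n)}_j{U^i_j}^*$ reproduces the classical forward equation. If anything, your formulation of the induction via the one-step recursion $p^{(n+1)}_i=\sum_j P(j,i)\,p^{(n)}_j$ is tidier than the paper's explicit one- and two-step computations followed by an appeal to induction, and your closing remark that the $U^i_j$ are irrelevant to the distributions is a worthwhile observation the paper leaves implicit.
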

\begin{proof}
The relation on the operators $B^i_j$ is obvious for
$$
\sum_{i} {B^i_j}^* B^i_j=\sum_i P(j,i)\, {U^i_j}^* U^i_j=\sum_i P(j,i) I=I\,.
$$
Whatever is the initial state $\r$, if we put $\r_i=\langle i\vert\, \r\,\vert i\rangle$, we get by Proposition \ref{P:diagonal} and its proof
$$
\rM(\r)=\sum_i\left(\sum_k P(k,i) {U^i_k}\r_k{U^i_k}^*\right)\otimes \vert i\rangle\langle i\vert\,.
$$
The probability to be located on site $i$ is then
$$
\sum_k P(k,i) \tr\left({U^i_k}\r_k{U^i_k}^*\right)=\sum_k P(k,i) \tr(\r_k)\,.
$$
That is, we get the classical transition probabilities for a classical Markov chain on the set $\rV$, driven by the transition probabilities $P(i,j)$ and with initial measure $p^{(0)}_i=\tr(\r_i)$. 

After two steps, the probability to be located at site $i$ is
$$
\sum_k \sum_l P(l,k)P(k,i) \tr\left({U^i_k}U_l^k\r_l{U_l^k}^*{U^i_k}^*\right)=\sum_k\sum_l P(l,k) P(k,i) \tr(\r_l)\,.
$$
That is, once again the usual transition probabilities for two steps of the above Markov chain. It is not difficult to get convinced, by induction, that this works for any number of steps. 
\end{proof}

\section{Quantum Trajectories}
We shall now describe a very interesting and convenient way to simulate OQRWs by  means of \emph{Quantum Trajectories}. This property seems very important when one wants to study the limit behavior of these quantum random walks (see \cite{AGS}). 

\smallskip
The principle of the quantum trajectories associated to an open quantum random walk is the following. Starting from any initial state $\r$ on $\rH\otimes\rK$ we apply the mapping $\rM$ and then a measurement of the position in $\rK$. We end up with a random result for the measurement and a reduction of the wave-packet gives rise to a random state on $\rH\otimes\rK$ of the form
$$
\r_i\otimes \vert i\rangle\langle i\vert\,.
$$
We then apply the procedure again: an action of the mapping $\rM$ and a measurement of the position in $\rK$.

\begin{theorem}\label{T:quantum_traj}
By repeatedly applying the completely positive map $\rM$ and a measurement of the position on $\rK$, one obtains a sequence of random states on $\rH\otimes\rK$. This sequence is a non-homogenous Markov chain with law being described as follows. If the state of the chain at time $n$ is $\r\otimes \vert j\rangle\langle j\vert$, then at time $n+1$ it jumps to one of the values
$$
\frac{1}{p(i)}\,B^i_j \r {B^i_j}^*\otimes \vert i\rangle\langle i\vert\,,\ \ i\in\rV,
$$
with probability
$$
p(i)=\tr\left(B^i_j \r {B^i_j}^*\right)\,.
$$
This Markov chain $(\r^{(n)})$
is a simulation of the master equation driven by $\rM$, that is,
$$
\EE\left[\r^{(n+1)}\,\vert\, \r^{(n)}\right]=\rM(\r^{(n)})\,.
$$
Furthermore, if the initial state is a pure state, then the quantum trajectory stays valued in pure states and the Markov chain is described as follows. If the state of the chain at time $n$ is the pure state $\vert \varphi\rangle\otimes \vert j\rangle$, then at time $n+1$ it jumps to one of the values
$$
\frac{1}{\sqrt{p(i)}}\, B^i_j\, \vert\varphi\rangle \otimes \vert i\rangle\,,\ \ i\in\rV,
$$
with probability
$$
p(i)=\normca{B^i_j \,\vert\varphi\rangle}\,.
$$
\end{theorem}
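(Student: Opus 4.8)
The plan is to deduce the whole statement from the projection postulate for von Neumann measurements, applied to the block-diagonal density matrices produced by $\rM$. Suppose the chain is in the state $\r\otimes\vert j\rangle\langle j\vert$ at time $n$ (with $\tr(\r)=1$). Applying $\rM$ and invoking Corollary \ref{C:Lrrho} — legitimate because states of the form (\ref{rho}) are $\rM$-stable by Proposition \ref{P:diagonal} — yields the block-diagonal operator $\rM(\r\otimes\vert j\rangle\langle j\vert)=\sum_i B^i_j\,\r\,{B^i_j}^*\otimes\vert i\rangle\langle i\vert$, the series being trace-norm convergent. A measurement of the position in $\rK$, i.e. along the spectral projections $I\otimes\vert i\rangle\langle i\vert$, then returns the outcome $i$ with probability equal to the trace of the $i$-th block, $p(i)=\tr(B^i_j\,\r\,{B^i_j}^*)$ (Lemma \ref{L:3}); these sum to $1$ by Lemma \ref{L:1} together with $\sum_i{B^i_j}^*B^i_j=I$. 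Reduction of the wave packet replaces the state by the normalized $i$-th block $\frac{1}{p(i)}\,B^i_j\,\r\,{B^i_j}^*\otimes\vert i\rangle\langle i\vert$, which is the announced transition. Since this recipe uses only the current state and the fresh randomness of the measurement, $(\r^{(n)})$ is a Markov chain, and it remains forever valued in states of the form $\r\otimes\vert i\rangle\langle i\vert$, so the description is consistent at every step.

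For the master-equation assertion I would just take the conditional expectation of one step: given $\r^{(n)}=\r\otimes\vert j\rangle\langle j\vert$,
$$
\EE\left[\r^{(n+1)}\,\vert\,\r^{(n)}\right]=\sum_{i}p(i)\,\frac{1}{p(i)}\,B^i_j\,\r\,{B^i_j}^*\otimes\vert i\rangle\langle i\vert=\sum_i B^i_j\,\r\,{B^i_j}^*\otimes\vert i\rangle\langle i\vert=\rM(\r^{(n)})\,,
$$
the indices with $p(i)=0$ contributing nothing and the series converging in trace-norm by Proposition \ref{P:diagonal}; iterating with the tower property gives $\EE[\r^{(n)}]=\rM^n(\r^{(0)})$, so the quantum trajectory unravels the master equation.

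The pure-state case is then a specialization. If $\r^{(n)}=\vert\varphi\rangle\langle\varphi\vert\otimes\vert j\rangle\langle j\vert$ with $\|\varphi\|=1$, each block of $\rM(\r^{(n)})$ equals $B^i_j\vert\varphi\rangle\langle\varphi\vert{B^i_j}^*=(B^i_j\vert\varphi\rangle)(B^i_j\vert\varphi\rangle)^*$, a rank-$\leq 1$ operator; hence $p(i)=\normca{B^i_j\,\vert\varphi\rangle}$ and the post-measurement state is the projector onto $\frac{1}{\sqrt{p(i)}}\,B^i_j\vert\varphi\rangle\otimes\vert i\rangle$. So the trajectory stays in pure states and one may track the vectors directly.

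I do not anticipate any real difficulty: the argument is nothing more than the projection postulate applied to an already-diagonal state. The only points deserving a word of care are the infinite-dimensional bookkeeping — strong and trace-norm convergence of the series involved, all of it inherited from Lemmas \ref{L:1} and \ref{L:2} and Proposition \ref{P:diagonal} — and the formal construction of a Markov chain on the (in general uncountable) set of states, together with the harmless treatment of measurement outcomes of probability zero and the measurability of the transition kernel; I would dispatch these briefly.
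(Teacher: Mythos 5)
Your proposal is correct and follows essentially the same route as the paper's own proof: apply $\rM$ to the block-diagonal state via Corollary \ref{C:Lrrho}, invoke the projection postulate to get the transition probabilities $p(i)=\tr(B^i_j\,\r\,{B^i_j}^*)$ and the normalized post-measurement states, verify the Markov property and the conditional-expectation identity by the same telescoping sum, and specialize to rank-one blocks for the pure-state case. Your added remarks on trace-norm convergence, zero-probability outcomes and measurability are slightly more careful than the paper's treatment but do not change the argument.
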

\begin{proof}
Let $\r\otimes\vert j\rangle\langle j\vert$ be the initial state. After acting by $\rM$ the state is
$$
\sum_i (B_i^j \r {B_i^j}^*)\otimes \vert i\rangle\langle i\vert\,.
$$
Measuring the vertices, gives the site $i$ with probability
$$
p(i)=\tr(B_i^j \r {B_i^j}^*)\,.
$$
By the usual wave-packet reduction postulate, the state after having been measured with this value is
$$
\frac 1{p(i)} (B_i^j \r {B_i^j}^*)\otimes\vert i\rangle\langle i\vert\,.
$$
This state being given, if we repeat the procedure, then clearly the next step depends only on the new state of the system. We end up with a (non-homogenous) Markov chain structure. 

On average, the values of this Markov chain after one step is
\begin{align*}
\EE\left[\r^{(n+1)}\,\vert\, \r^{(n)}=\r\otimes \vert j\rangle\langle j\vert\right]&=\sum_i p(i)\,\frac 1{p(i)} (B_i^j \r {B_i^j}^*)\otimes\vert i\rangle\langle i\vert\\
&=\sum_i  (B_i^j \r {B_i^j}^*)\otimes\vert i\rangle\langle i\vert\\
&=\rM(\r^{(n)})\,.
\end{align*}

If $\r$ is a pure state $\vert \phi\rangle\langle\phi\vert\otimes \vert i\rangle\langle i\vert$, then it stays a pure state at each step. Indeed, any initial pure state $\vert \phi\rangle\langle \phi\vert\otimes\vert i\rangle\langle i\vert$ will jump randomly to one of the states 
$$
\frac 1{p^j_i}\,B^j_i\vert  \phi\rangle\langle \phi\vert {B^j_i}^*\otimes \vert j\rangle\langle j\vert
$$
with probability 
$$
p(i)=\tr(B^j_i \vert  \phi\rangle\langle \phi\vert {B^j_i}^*)\,.
$$
In other words, it jumps from the pure state $\vert \phi\rangle\otimes \vert i\rangle$ to any of the pure states 
$$
\frac 1{\sqrt{p(i)}}\,B^j_i\vert  \phi\rangle\otimes \vert j\rangle
$$ 
with probability
$$
p(i)=\norme{B^j_i \vert  \phi\rangle}^2\,.
$$
We have a classical Markov chain valued in the space of wave functions of the form $\vert \phi\rangle\otimes \vert i\rangle$. On average, this random walk simulates the master equation driven by $\rM$. 
\end{proof}

\section{Realization Procedure}

It is natural to wonder how such Open Quantum Random Walks can actually be realized physically. We shall here discuss a way to achieve it.

For the sake of a simple discussion, we restrict ourselves in this section to the case where either $\rV$ is finite, or the number of non-vanishing $B^i_j$'s is finite for every fixed $j$. This is the case in all our examples and makes all the sums finite in the following.

\smallskip
Consider an open quantum random walk on $\rV$ with chirality space $\rH$ and with associated transition operators $B^i_j$. Recall that we have supposed that
$$
\sum_{i\in\rV} {B^i_j}^*B^i_j=I
$$
for all $j\in\rV$. Hence, for all $j\in\rV$ there exists a unitary operator $U(j)$ on $\rH\otimes\rK$ whose first column (we choose $\vert 1\rangle$ to be the first vector) is given by
$$
U^i_1(j)=B^i_j\,.
$$
This unitary operator is a unitary operator that dilates the completely positive map  
$$
\rM_j(\r)=\sum_i B^i_j\,\r\, {B^i_j}^*\,.
$$
In other words, the completely positive map 
$\rM_j$
on $\rH$ is the partial trace of some unitary interaction between $\rH$ and some environment $\rE$. It is well-known that the dimension of the environment can be chosen to be the same as the number of Krauss operators appearing in the decomposition of $\rM_j$, that is, in our case they are indexed by $\rV$. Hence the environment can be chosen to be $\rE=\rK$. 

\smallskip
The state space on which one performs the realization procedure is $\rH\otimes\rK_1\otimes\rK_2$ where $\rK_1$ and $\rK_2$ are two copies of $\rK$. Let us present the main ingredients which shall appear in the realization procedure.

\smallskip
Each unitary operator $U(j)$ defined above acts on $\rH\otimes \rK_1$. We construct the unitary operator 
$$
U=\sum_j U(j)\otimes \vert j\rangle\langle j\vert
$$
which acts now on $\rH\otimes\rK_1\otimes\rK_2$\,. 

\smallskip
We shall also need the so-called \emph{swap operator} $S$ on  $\rK_1\otimes\rK_2$ defined by
$$
S (\vert j\rangle\otimes\vert k\rangle)=\vert k\rangle\otimes\vert j\rangle\,.
$$
It is a unitary operator on $\rK_1\otimes\rK_2$ which simply expresses the fact of exchanging the two systems $\rK_1$ and $\rK_2$.

\smallskip
We shall also use a \emph{decoherence procedure} on the space $\rK_1$, along the basis $(\vert i\rangle)$. By this we mean the following: if the system is in a superposition of pure states 
$$
\vert \varphi\rangle =\sum_i \l_i \, \vert i\rangle\,,
$$
then this system is coupled to an environment in such a way and in a sufficiently long time, for the state of $\rK_1$ to become
$$
\sum_i \ab{\l_i}^2\, \vert i\rangle\langle i\vert\,.
$$
This is to say that we have chosen a coupling of $\rK_1$ with some environment which makes the off-diagonal terms of the density matrix $\vert\varphi\rangle\langle\varphi\vert$ converge exponentially fast to 0. This kind of decoherence is now well-known in physics. It is rather easy to describe an environment and an explicit Hamiltonian which will produce such a result, we do not develop this point here, see \cite{A-P} for example.

\smallskip
Finally, we shall need a \emph{refreshing procedure}, that is, if $\rK_1$ is in any state $\r$ then we put it back to the state $\vert 1\rangle\langle 1\vert$. By this we mean either that the system $\rK_1$ is taken away  and re-prepared in the state $\vert 1\rangle\langle 1\vert$, or that a new copy of $\rK_1$ in the state $\vert 1\rangle\langle 1\vert$ is brought into the game in order to replace the old copy, which will play no role anymore (see \cite{A-P} for theoretical setup, or \cite{Har} for concrete experiments). 

\begin{proposition}\label{P:physical}
Consider the quantum system $\rH\otimes\rK_1\otimes\rK_2$, together with some initial state 
$$
\r^{(0)}=\sum_k \r_k\otimes \vert 1\rangle\langle 1\vert\otimes\vert k\rangle\langle k\vert\,.
$$
If we perform successively

\smallskip\noindent
1) an action of the unitary operator $U$,

\smallskip\noindent
2) a decoherence on the basis $(\vert i\rangle)$ of the system $\rK_1$,

\smallskip\noindent
3) an action of the swap operator $I\otimes S$

\smallskip\noindent
4) a refreshing of the system $\rK_1$ to the state $\vert 1\rangle\langle 1\vert$

\smallskip\noindent
then the state of the system becomes
$$
\sum_k \left(\sum_l B_l^k\r_l {B_l^k}^*\right)\otimes \vert 1\rangle\langle 1\vert\otimes\vert k\rangle\langle k\vert\,.
$$
That is, one reads the first step of the dissipative quantum random walk on $\rH\otimes\rK_2$. 

By iterating this whole procedure one produces the dissipative quantum random walk on $\rH\otimes\rK_2$. 
\end{proposition}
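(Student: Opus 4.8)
The plan is to track the state of the composite system $\rH\otimes\rK_1\otimes\rK_2$ through the four operations, computing it explicitly at each stage. Since the operations are applied to a state which is diagonal in the $\rK_1\otimes\rK_2$ indices, and since all four operations preserve (or restore) a suitable block structure, the bookkeeping should stay manageable; the whole computation is linear, so it suffices to follow a single summand $\r_k\otimes\vert 1\rangle\langle 1\vert\otimes\vert k\rangle\langle k\vert$ and sum at the end.

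First I would apply $U=\sum_j U(j)\otimes\vert j\rangle\langle j\vert$. On the summand indexed by $k$, the factor $\vert k\rangle\langle k\vert$ in $\rK_2$ selects the term $j=k$, so $U$ acts on $\rH\otimes\rK_1$ as $U(k)$. By construction the first column of $U(k)$ is $U^i_1(k)=B^i_k$, so $U(k)\bigl(\r_k\otimes\vert 1\rangle\langle 1\vert\bigr)U(k)^*=\sum_{i,l}B^i_k\,\r_k\,{B^l_k}^*\otimes\vert i\rangle\langle l\vert$. Hence after step 1 the $k$-summand is $\sum_{i,l}B^i_k\,\r_k\,{B^l_k}^*\otimes\vert i\rangle\langle l\vert\otimes\vert k\rangle\langle k\vert$. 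Step 2, the decoherence on the basis $(\vert i\rangle)$ of $\rK_1$, kills the off-diagonal terms $i\neq l$, leaving $\sum_i B^i_k\,\r_k\,{B^i_k}^*\otimes\vert i\rangle\langle i\vert\otimes\vert k\rangle\langle k\vert$. This is the key point where one must invoke precisely the description of the decoherence procedure given above: on a density matrix it sends $\sum_{i,l}c_{il}\vert i\rangle\langle l\vert$ to $\sum_i c_{ii}\vert i\rangle\langle i\vert$, which one checks is legitimate here since each $B^i_k\r_k{B^i_k}^*$ is positive trace-class and the finiteness assumption of this section makes the sum over $i$ finite.

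Next, step 3 applies $I\otimes S$, which on $\rK_1\otimes\rK_2$ swaps $\vert i\rangle\otimes\vert k\rangle$ with $\vert k\rangle\otimes\vert i\rangle$; acting by conjugation it sends $\vert i\rangle\langle i\vert\otimes\vert k\rangle\langle k\vert$ to $\vert k\rangle\langle k\vert\otimes\vert i\rangle\langle i\vert$, so the $k$-summand becomes $\sum_i B^i_k\,\r_k\,{B^i_k}^*\otimes\vert k\rangle\langle k\vert\otimes\vert i\rangle\langle i\vert$. Now summing over $k$ and relabelling (write the outer $\rK_2$-index as $k$ and the summation index as $l$) gives $\sum_k\bigl(\sum_l B^k_l\,\r_l\,{B^k_l}^*\bigr)\otimes\vert k\rangle\langle k\vert\otimes\vert k\rangle\langle k\vert$. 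Wait — one must be careful with the index names: after the swap the $\rK_1$-factor carries the old $\rK_2$-label, so re-deriving cleanly, after step 3 the total state is $\sum_{k,i}B^i_k\r_k{B^i_k}^*\otimes\vert k\rangle\langle k\vert\otimes\vert i\rangle\langle i\vert$. Finally step 4, the refreshing of $\rK_1$, discards whatever state $\rK_1$ is in and resets it to $\vert 1\rangle\langle 1\vert$; formally this replaces the $\rK_1$-factor $\vert k\rangle\langle k\vert$ by $\vert 1\rangle\langle 1\vert$ (taking a partial trace over $\rK_1$ and tensoring back $\vert 1\rangle\langle 1\vert$), yielding $\sum_i\bigl(\sum_k B^i_k\,\r_k\,{B^i_k}^*\bigr)\otimes\vert 1\rangle\langle 1\vert\otimes\vert i\rangle\langle i\vert$, which upon renaming $i\to k$ and $k\to l$ is exactly $\sum_k\bigl(\sum_l B^k_l\,\r_l\,{B^k_l}^*\bigr)\otimes\vert 1\rangle\langle 1\vert\otimes\vert k\rangle\langle k\vert$, the asserted state. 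Comparing with Corollary \ref{C:Lrrho}, the $\rH\otimes\rK_2$ marginal is precisely $\rM$ applied to $\sum_l\r_l\otimes\vert l\rangle\langle l\vert$.

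The only genuine subtlety — and the step I expect to be the main obstacle to a fully rigorous writeup — is the careful handling of the index relabelling across the swap together with the justification that each intermediate object is a bona fide trace-class operator (so that partial traces, decoherence, and refreshing are all well-defined); this is exactly where the standing finiteness hypothesis of the section is used, and where one should lean on Proposition \ref{P:diagonal} and its proof for the estimates $\sum_i\tr(B^i_k\r_k{B^i_k}^*)=\tr(\r_k)<\infty$. Once the state at the end of one round is seen to have the same form $\sum_k\r'_k\otimes\vert 1\rangle\langle 1\vert\otimes\vert k\rangle\langle k\vert$ as the input, the iteration statement follows immediately by induction, each round implementing one application of $\rM$ on the $\rH\otimes\rK_2$ marginal.
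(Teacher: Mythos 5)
Your proof is correct and follows essentially the same computation as the paper: track the state through the four operations (selection of $U(k)$ by the $\rK_2$ factor, dephasing of the $\rK_1$ off-diagonals, swap, refresh) and read off the result. The only difference is that the paper first runs the computation on a pure state $\vert\phi\rangle\otimes\vert 1\rangle\otimes\vert k\rangle$ and then extends to general mixtures by linearity, whereas you conjugate the density matrix $\r_k\otimes\vert 1\rangle\langle 1\vert\otimes\vert k\rangle\langle k\vert$ directly; both amount to the same bookkeeping.
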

\begin{proof}
The unitary operator $U(k)$ admits a decomposition
$$
U(k)=\sum_{i,j} U^i_j(k)\otimes \vert j\rangle\langle i\vert\,.
$$
In particular we have
$$
\sum_j U^{i'}_j(k)^*\,U^i_j(k)=\delta_{i,i'} I\,.
$$
On the space $\rH\otimes\rK_1\otimes\rK_2$ the operator $U$ as defined above is then decomposed into
$$
U=\sum_{i,j,k} U^i_j(k)\otimes \vert j\rangle\langle i\vert\otimes \vert k\rangle\langle k\vert\,.
$$
Now starting in a pure state $\vert\phi\rangle\otimes\vert 1\rangle\otimes\vert k\rangle$, we get
$$
U\,(\vert\phi\rangle\otimes\vert 1\rangle\otimes\vert k\rangle)=\sum_j U^j_1(k)\vert\phi\rangle \otimes\vert j\rangle\otimes\vert k\rangle=\sum_j B^j_k\vert\phi\rangle \otimes\vert j\rangle\otimes\vert k\rangle\,.
$$
This is the first step of the procedure.

The second step consists in performing a decoherence on the first space $\rK$. The pure state 
$$
\sum_j B^j_k\vert\phi\rangle \otimes\vert j\rangle\otimes\vert k\rangle
$$
is then mapped to the density matrix
\begin{equation}\label{E:decoherent}
\sum_j B^j_k\vert\phi\rangle \langle \phi\vert{B^j_k}^*\otimes\vert j\rangle\langle j\vert\otimes\vert k\rangle\langle k\vert\,.
\end{equation}
 Applying $I\otimes S$ to the state (\ref{E:decoherent}) we get the state
\begin{equation}\label{E:final}
\sum_j B^j_k\vert\phi\rangle \langle \phi\vert{B^j_k}^*\otimes\vert k\rangle\langle k\vert\otimes\vert j\rangle\langle j\vert\,.
\end{equation}
On the space $\rH$ and the second space $\rK$ one can now read the first step of our quantum random walk.

Finally, \emph{refresh} the first space $\rK$ into the state $\vert 1\rangle$, we then end up with the state
$$
\sum_j B^j_k\vert\phi\rangle \langle \phi\vert{B^j_k}^*\otimes\vert 1\rangle\langle 1\vert\otimes\vert j\rangle\langle j\vert\,,
$$
on which one can apply our procedure again.

If the initial state is not a pure state but a density matrix, a mixture of pure states, it is not difficult to see that the procedure described above gives the right combination and the right final state.
\end{proof}

\smallskip
To summarize, the quantum random walk is obtained in the following way. Dilate each of the maps $\rL_k$ into a unitary operator $U(k)$ on $\rH\otimes \rK_1$, start in the desired initial state on $\rH$ and the second space $\rK_2$, with the first space $\rK_1$ being in the state $\vert 1\rangle$,   then iterate the following procedure on $\rH\otimes\rK_1\otimes\rK_2$:

\smallskip\noindent
-- apply the unitary operator $\sum_k U(k)\otimes \vert k\rangle\langle k\vert$,

\smallskip\noindent
-- perform a decoherence on  $\rK_1$

\smallskip\noindent
-- apply the unitary shift $I\otimes S$

\smallskip\noindent
-- refresh the first space $\rK_1$ into the state $\vert 1\rangle$. 

\smallskip\noindent
The dissipative quantum random walk now appears on $\rH\otimes\rK_2$. 

\section{Examples of Realization Procedure}

Let us illustrate this realization procedure with two of the physical examples developed in Sections \ref{S:exampleZ} and \ref{S:example_graph}. 

\smallskip
In the case of stationary walks on $\ZZ$ the procedure can be considerably simplified, as follows. The procedure we describe below is slightly different from the one presented in Proposition \ref{P:physical}, but it is actually the same one, presented in a different way, taking into account several simplifications offered by the model.

Consider an open quantum random walk on $\ZZ$ driven by two operators $B$ and $C$ on $\rH$. Consider a unitary operator $U$ on $\rH\otimes\CC^2$ of the form
$$
U=\left(
\begin{matrix} 
B&X\\
C&Y
\end{matrix}
\right)\,,
$$
that is, a dilation of the completely positive map driven by $B$ and $C$.
Let $\rK=\CC^\ZZ$ and consider the space $\rH\otimes\CC^2\otimes\rK$. 
On the space $\CC^2\otimes\rK$ we consider the \emph{shift} operator  given by
$$
S(\vert 0\rangle\langle 0\vert\otimes \vert k\rangle\langle k\vert)=\vert 0\rangle\langle 0\vert\otimes \vert k-1\rangle\langle k-1\vert
$$
and
$$
S(\vert 1\rangle\langle 1\vert\otimes \vert k\rangle\langle k\vert)=\vert 0\rangle\langle 0\vert\otimes \vert k+1\rangle\langle k+1\vert\,.
$$

Now, let us detail the procedure. Starting with a state $\vert\varphi\rangle\otimes \vert 0\rangle\otimes\vert k\rangle$ we apply the operator $U\otimes I$ and end up with the state
$$
B\vert\varphi\rangle\otimes \vert 0\rangle\otimes\vert k\rangle+C\vert\varphi\rangle\otimes \vert 1\rangle\otimes\vert k\rangle\,.
$$
Applying the decoherence on $\CC^2$ we get the state
$$
B\vert\varphi\rangle\langle \varphi\vert B^*\otimes \vert 0\rangle\langle 0\vert\otimes\vert k\rangle\langle k\vert+C\vert\varphi\rangle\langle \varphi\vert C^*\otimes \vert 1\rangle\langle 1\vert\otimes\vert k\rangle\langle k\vert\,.
$$
Applying the shift operator, the state becomes
$$
B\vert\varphi\rangle\langle \varphi\vert B^*\otimes \vert 0\rangle\langle 0\vert\otimes\vert k-1\rangle\langle k-1\vert+C\vert\varphi\rangle\langle \varphi\vert C^*\otimes \vert 1\rangle\langle 1\vert\otimes\vert k+1\rangle\langle k+1\vert\,.
$$
Refreshing the space $\CC^2$ we end up with
$$
B\vert\varphi\rangle\langle \varphi\vert B^*\otimes \vert 0\rangle\langle 0\vert\otimes\vert k-1\rangle\langle k-1\vert+C\vert\varphi\rangle\langle \varphi\vert C^*\otimes \vert 0\rangle\langle 0\vert\otimes\vert k+1\rangle\langle k+1\vert\,.
$$
One can read the first step of the dissipative quantum random walk on $\rH\otimes\rK$:
$$
B\vert\varphi\rangle\langle \varphi\vert B^*\otimes\vert k-1\rangle\langle k-1\vert+C\vert\varphi\rangle\langle \varphi\vert C^*\otimes\vert k+1\rangle\langle k+1\vert\,.
$$

Let us now detail the case of the first example of Section \ref{S:example_graph}, the open quantum random walk on the 2-vertices graph.

Consider a two-level quantum system $\rH$ coupled to another two-level quantum system $\rK_1$ via the  Hamiltonian
$$
H=i\gamma(\s^+\otimes \s^--\s^-\otimes \s^+)
$$
where 
$$
\s^+=\left(\begin{matrix} 0&1\\0&0\end{matrix}\right)\qquad\mbox{and}\qquad\s^-=\left(\begin{matrix} 0&0\\1&0\end{matrix}\right)\,.
$$
Then the unitary evolution associated to this Hamiltonian, for a time length $t=1$ (and $\hbar=1$) is given by
$$
U=e^{-iH}=\left(\begin{matrix}1&0&0&0\\ 0&\cos(\g)&-\sin(\g)&0\\0&\sin(\g)&\cos(\g)&0\\0&0&0&1\end{matrix}\right)\,.
$$
Hence, for a good choice of $\g$, that is, for $\sin(\g)=\sqrt p$ we have
$$
U=\left(\begin{matrix}1&0&0&0\\ 0&\sqrt{1-p}&-\sqrt p&0\\0&\sqrt p&\sqrt{1-p}&0\\0&0&0&1\end{matrix}\right)\,.
$$
In other words $U$ is of the form
$$
U=\left(\begin{matrix}C&X\\B&Y\end{matrix}\right)
$$
as a block matrix on $\rK_1$, where $B$ and $C$ are those matrices associated to our example. This is to say that we have given here an explicit dilation of the completely positive map associated to the matrices $B$ and $C$. 

\smallskip
If $D_1$ and $D_2$ are two diagonal matrices satisfying $D_1^*D_1+D_2^*D_2=I$ then assume, for simplicity only, that they have real entries
$$
D_1=\left(\begin{matrix}a&0\\0&\a\end{matrix}\right),\qq D_2=\left(\begin{matrix}b&0\\0&\b\end{matrix}\right)\,,
$$
with $a^2+b^2=\a^2+\b^2=1$. Then, one can write $a=\cos(\l)$ and $\a=\cos(\m)$. Considering the Hamiltonian
$$
K=\left(\begin{matrix} \l&0\\0&\m\end{matrix}\right)\otimes\left(\begin{matrix} 0&-i\\i&0\end{matrix}\right)\,,
$$
we get that $e^{-iK}$ is of the form
$$
V=\left(\begin{matrix} D_1&X'\\D_2&Y'\end{matrix}\right)\,.
$$
We have realized a concrete physical dilation of the completely positive map associated to $D_1$ and $D_2$. 

\smallskip
Following Proposition \ref{P:physical}, consider on $\rH\otimes\rK_1\otimes\rK_2=\CC^2\otimes\CC^2\otimes\CC^2$ the unitary evolution
$$
\left(\begin{matrix} V&0\\0&U\end{matrix}\right)\,,
$$
written as a block matrix on $\rK_2$. 
This is to say that $\rH$ is coupled to $\rK_1$ with the Hamiltonian $K$ when $\rK_2$ is in the state $\vert 1\rangle\langle 1\vert$ and $\rH$ is coupled to $\rK_1$ with the Hamiltonian $H$ when $\rK_2$ is in the state $\vert 2\rangle\langle 2\vert$. 

\smallskip
In this context, the swap operator $S$ takes the following simple form on $\rK_1\otimes\rK_2$
$$
S=\left(\begin{matrix}1&0&0&0\\0&0&1&0\\0&1&0&0\\0&0&0&1\end{matrix}\right)\,.
$$

\smallskip
Then, following the four steps of Proposition \ref{P:physical} gives a realization of the associated quantum random walk on $\rH\otimes\rK_2$. 

\section{Unitary Quantum Random Walks}\label{S:UQRW}

The Open Quantum Random Walks we have been describing up to now are actually very different from the well-known Unitary Quantum Random Walks, such as the Hadamard random walk (see Introduction for some references). This is to say that they produce probability distributions which are not of the same type as the ones usually observed with the Hadamard quantum random walks. 

It seems to us that there is no way to produce limit distributions such as the one observed in the Hadamard quantum random walk central limit theorem, with open quantum random walks. The limit behaviors of Open Quantum Random Walks seems to be all Gaussian or mixtures of Gaussians. The dissipative character of our quantum random walks makes them very different from the unitary evolution describing the usual type of quantum random walks. There is no inclusion, direct connection or simplification which establishes a direct link between OQRW and Unitary Quantum Random Walks.

However, there is quite a surprising and strong link between the two types of quantum random walks which appears via the physical realization procedure presented in Section 8. Under some conditions on the $B_i^j$'s, by modifying this procedure one can produce the usual unitary quantum walks. We insist again on the fact that this connection does not say that OQRW can give rise to Unitary Quantum Walks in some cases, but only that by modifying one step of the physical procedure, we get the unitary random walks. Let us develop all this here.

\smallskip
Let $\rV$ be a set of vertices, let $\rH$ be a Hilbert space representing the chirality. For each pair $(i,j)$ in $\rV^2$ we have a bounded operator $B^i_j$ on $\rH$. Instead of the usual condition
$$
\sum_i {B^i_j}^* B^i_j=I
$$ for all $j$, we shall ask here a much stronger condition, namely for all $j,j'\in\rV$
\begin{equation}\label{E:jjprime}
\sum_i {B^i_j}^*B^i_{j'}=\delta_{jj'} I\,.
\end{equation}
In other words, being given two starting points $j$ and $j'$, the sum of the ``contributions" which go to the same points $i\in\rV$ vanish, unless $j=j'$ in which case we recover the usual condition.

Note that there is no analogue of this condition for classical Markov matrices.

\smallskip
Let us illustrate this condition with an example. For a stationary quantum random walk on $\ZZ$ we are given two operators $B$ and $C$ on $\rH$ which represent the effect of making one step to the left or one step to the right. The usual condition, obtained by taking $j=j'$ gives
$$
B^*B+C^*C=I\,.
$$
Now, taking $j'=j+1$, we get a supplementary condition:
$$
C^*B=0\,.
$$
This is the only new condition added to the usual one in that case. 
Note that these two conditions together imply in particular that $B+C$ is unitary.

These two conditions are typically satisfied by the following family of examples. Let 
$$
U=\left(
\begin{matrix}
a&b\\
c&d
\end{matrix}
\right)
$$
be a unitary matrix on $\CC^2$. Put 
$$
B=\left(
\begin{matrix}
a&b\\
0&0
\end{matrix}
\right)\qquad\mbox{and}\qquad C=\left(
\begin{matrix}
0&0\\
c&d
\end{matrix}
\right)\,.
$$
Then, $B$ and $C$ satisfy 
$$
B^*B+C^*C=I\qquad\mbox{and}\qquad C^*B=0\,.
$$
This is typically the case with Hadamard random walk where
$$
U=\frac{1}{\sqrt 2}\left(
\begin{matrix}
1&1\\
1&-1
\end{matrix}
\right)\,.
$$

\smallskip
Let us see what happens, in the general context, with this additional condition. The point is the following, if we are given a pure state on $\rH\otimes\rK$ of the form
$$
\vert\psi\rangle=\sum_i \vert\varphi_i\rangle\otimes\vert i\rangle
$$
with the condition
$$
\normca{\psi}=\sum_i\normca{\varphi_i}=1
$$
then the state
$$
\vert\psi'\rangle=\sum_i\left(\sum_j B^i_j \vert \varphi_j\rangle\right)\otimes \vert i\rangle
$$
is of the same form and satisfies
\begin{align*}
\normca{\psi'}&=\sum_i \sum_{j,j'} \ps{\varphi_j}{{B^i_j}^*B^i_{j'}\, \varphi_{j'}}\\
&=\sum_{j,j'} \ps{\varphi_j}{\delta_{jj'}I\, \varphi_{j'}}\\
&=\sum_j \normca{\varphi_j}\\
&=1\,.
\end{align*}
Hence, at each step we get a state of the form 
$$
\vert\psi\rangle=\sum_i \vert\varphi_i\rangle\otimes\vert i\rangle
$$
with the condition
$$
\normca{\psi}=\sum_i\normca{\varphi_i}=1\,.
$$
In particular it determines, at each step, a probability distribution on $\rV$ by putting
$$
P(i)=\normca{\varphi_i}\,.
$$
This is exactly the picture for the Unitary Quantum Random Walks, such as the Hadamard quantum random walk. 

\smallskip
Now the interesting point is the way one can physically realize such Unitary Quantum Random Walks and the way this construction is similar to the one of Open Quantum Random Walks.

\begin{proposition}\label{P:physical2}
If the transition operators $B^i_j$ satisfy the more restrictive condition (\ref{E:jjprime}), then applying the same physical procedure as in Proposition \ref{P:physical} without the decoherence step (step 2) gives rise to a unitary quantum random walk.
\end{proposition}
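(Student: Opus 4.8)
The plan is to trace through the four-step procedure of Proposition \ref{P:physical} exactly as before, but omitting step 2, and to check that the off-diagonal coherences in $\rK_1$ which the decoherence step used to destroy are now harmless because of the stronger hypothesis (\ref{E:jjprime}). First I would take as initial state a pure state of the form $\bigl(\sum_l\vert\varphi_l\rangle\otimes\vert l\rangle\bigr)\otimes\vert 1\rangle$ on $\rH\otimes\rK_2\otimes\rK_1$ (writing the factors in the order that makes $U=\sum_k U(k)\otimes\vert k\rangle\langle k\vert$ act naturally), rather than the mixed diagonal state used in Proposition \ref{P:physical}; the point of the exercise is precisely that without decoherence purity is preserved.

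Then I would apply $U$: using the decomposition $U(k)=\sum_{i,j}U^i_j(k)\otimes\vert j\rangle\langle i\vert$ with $U^i_1(k)=B^i_k$, the state $\sum_l\vert\varphi_l\rangle\otimes\vert l\rangle\otimes\vert 1\rangle$ becomes $\sum_l\sum_i \bigl(B^i_l\vert\varphi_l\rangle\bigr)\otimes\vert l\rangle_{\rK_2}\otimes\vert i\rangle_{\rK_1}$. Skipping decoherence, I apply the swap $I\otimes S$ exchanging $\rK_1$ and $\rK_2$, obtaining $\sum_l\sum_i \bigl(B^i_l\vert\varphi_l\rangle\bigr)\otimes\vert i\rangle_{\rK_2}\otimes\vert l\rangle_{\rK_1}$. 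Finally the refreshing step resets $\rK_1$ to $\vert 1\rangle$; the key observation is that for this to be a legitimate unitary (reversible) operation on the relevant subspace — rather than a genuinely dissipative reset — one must check that the state is, before refreshing, still a product with $\rK_1$ in a pure normalized vector summed over $l$, and more importantly that discarding the $\rK_1$ register and re-setting it does not lose information. Here is where (\ref{E:jjprime}) enters: I would compute the norm of the $\rH\otimes\rK_2$-component, $\sum_i\sum_{l,l'}\ps{\varphi_l}{{B^i_l}^*B^i_{l'}\varphi_{l'}}=\sum_i\sum_l\normca{B^i_l\vert\varphi_l\rangle}\cdot$ wait — precisely $\sum_{l,l'}\delta_{ll'}\ps{\varphi_l}{\varphi_{l'}}=\sum_l\normca{\varphi_l}=1$, so the new state $\sum_i\bigl(\sum_l B^i_l\vert\varphi_l\rangle\bigr)\otimes\vert i\rangle$ on $\rH\otimes\rK_2$ is again a normalized vector of the same form, exactly as in the computation already carried out in the text just before this proposition. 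This shows that one step of the modified procedure implements the map $\vert\varphi_i\rangle\mapsto\sum_j B^i_j\vert\varphi_j\rangle$, which is the unitary quantum walk step.

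The main obstacle — and the point that deserves care rather than a one-line dismissal — is justifying that the refreshing of $\rK_1$ really is innocuous here, i.e.\ that after the swap the register $\rK_1$ genuinely factors out in the state $\sum_l(\cdots)\otimes\vert l\rangle_{\rK_1}$ in a way that is \emph{not} entangled with $\rH\otimes\rK_2$ — whereas in fact it plainly is entangled through the index $l$. The resolution, which I would spell out, is that condition (\ref{E:jjprime}) makes the family $\{\sum_j B^i_j\vert\varphi_j\rangle\}_i$ have the same total norm as $\{\vert\varphi_l\rangle\}_l$ by an \emph{isometry}, so the global map $U$ followed by swap, when composed over all steps, is unitary on the subspace of states of the prescribed form, and the ``refreshing'' is then simply the inverse bookkeeping that returns $\rK_1$ to its reference vector along the image of that isometry — no trace, no dissipation. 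I would then close by remarking that, for a mixed initial state, one decomposes into pure states and invokes linearity, exactly as at the end of the proof of Proposition \ref{P:physical}, noting that without the decoherence step the pure-state case is the substantive one since purity is now preserved.
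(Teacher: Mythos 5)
Your proposal is correct and follows the same skeleton as the paper's proof (run the four steps of Proposition \ref{P:physical} minus the decoherence, and use (\ref{E:jjprime}) to check norm preservation), but it handles the crucial step in a genuinely different and more careful way. The paper's proof only traces one iteration, starting from a site-localized pure state $\vert\phi\rangle\otimes\vert1\rangle\otimes\vert k\rangle$; after the swap, $\rK_1$ then sits in the unentangled vector $\vert k\rangle$, so the refresh is trivially harmless and condition (\ref{E:jjprime}) is never actually invoked inside the proof (it appears only in the discussion preceding the proposition). You instead start from a general delocalized state $\sum_l\vert\varphi_l\rangle\otimes\vert l\rangle\otimes\vert1\rangle$, which is precisely the situation from the second iteration onward, and you correctly observe that after the swap $\rK_1$ carries the previous position $l$ and is entangled with $\rH\otimes\rK_2$, so a literal trace-out-and-re-prepare refresh would decohere over $l$ and destroy exactly the interference that makes the walk unitary. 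Your resolution is the right one, and it can be closed rigorously by polarization: condition (\ref{E:jjprime}) gives
$$
\sum_i\sum_{l,l'}\ps{B^i_l\,\varphi_l}{B^i_{l'}\,\varphi'_{l'}}=\sum_{l,l'}\delta_{ll'}\ps{\varphi_l}{\varphi'_{l'}}=\sum_l\ps{\varphi_l}{\varphi'_l}\,,
$$
so the composite ``apply $U$, swap, reset $\rK_1$ to $\vert1\rangle$'' is an isometry on the subspace of states of the prescribed form and is therefore realizable reversibly, with no dissipation. This buys a complete argument valid for every iteration, at the price of reinterpreting the refreshing procedure as this unitary bookkeeping rather than the dissipative reset defined in Section 8; the paper's shorter argument sidesteps the issue only because it stops after the first step from a localized initial condition.
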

\begin{proof}
Let us follow again the steps of the construction in Proposition \ref{P:physical}.  
Starting in a pure state $\vert\phi\rangle\otimes\vert 1\rangle\otimes\vert k\rangle$, we get
$$
U\,(\vert\phi\rangle\otimes\vert 1\rangle\otimes\vert k\rangle)=\sum_j U^j_1(k)\vert\phi\rangle \otimes\vert j\rangle\otimes\vert k\rangle=\sum_j B^j_k\vert\phi\rangle \otimes\vert j\rangle\otimes\vert k\rangle\,.
$$
This is the first step of the procedure.

We now skip the decoherence part and apply $I\otimes S$ to the state. We get the state
\begin{equation}\label{E:final}
\sum_j B^j_k\vert\phi\rangle \otimes\vert k\rangle\otimes\vert j\rangle\,.
\end{equation}
On the space $\rH$ and the second space $\rK$ one can now read the first step of the quantum random walk.

Finally, \emph{refresh} the first space $\rK$ into the state $\vert 1\rangle$, we then end up with the state
$$
\sum_j B^j_k\vert\phi\rangle \otimes\vert 1\rangle\otimes\vert j\rangle\,,
$$
on which one can apply our procedure again. We recognize the action of the type of quantum random walks we announced. 
\end{proof}

\end{document}